\newcommand{\SR}{\mathcal{R}}
\newcommand{\Rl}{\mathbb{R}_{\le 0}}
\newcommand{\Rg}{\mathbb{R}_{> 0}}
\newcommand{\R}{\mathbb{R}}
\newcommand{\bra}[1]{\left[#1\right]}
\newcommand{\pa}[1]{\left(#1\right)}
\newcommand{\set}[1]{\left\{ #1 \right\}}
\newcommand{\N}{\mathbb{N}}
\newcommand{\pr}[1]{\Pr\bra{#1}}
\newcommand{\ex}[1]{\mathrm{E}\bra{#1}}
\newcommand{\norm}[1]{\left\lVert#1\right\rVert}
\newcommand{\elsec}{\text{else}}
\newcommand{\X}{\mathcal X}
\newcommand{\dX}{d_{\mathcal X}}
\newcommand{\Ps}{\mathcal P_R}
\tikzstyle{alt}=[fill={rgb,255: red,180; green,214; blue,255}, draw=black, shape=circle]
\tikzstyle{agent}=[fill={rgb,255: red,255; green,116; blue,106}, draw=black, shape=circle]
\tikzstyle{rect}=[fill={rgb,255: red,193; green,193; blue,193}, draw=black, shape=rectangle]
\tikzstyle{grey}=[-, draw={rgb,255: red,152; green,152; blue,152}]
\tikzstyle{row1}=[fill=none, -, draw=red]
\tikzstyle{row2}=[-, draw=green]
\tikzstyle{row3}=[-, draw=blue]
\tikzstyle{dashed}=[-, dashed]
\tikzstyle{thick}=[-]
\title{Utility-Based Communication Requirements for Stable Matching in Large Markets}
\author{Naveen Durvasula}
\begin{abstract}

Results from the communication complexity literature have demonstrated that stable matching requires communication: one cannot find or verify a stable match without having access to essentially all of the ordinal preference information held privately by the agents in the market. Stated differently, these results show that stable matching mechanisms are not robust to even a small number of labeled inaccuracies in the input preferences. In practice, these results indicate that agents must go through the time-intensive process of accurately ranking each and every potential match candidate if they wish for the resulting match to be guaranteedly stable. Thus, in large markets, communication requirements for stable matching may be impractically high. 

A natural question to ask, given this result, is whether some higher-order structure in the market can indicate which large markets have steeper communication requirements. In this paper, we perform such an analysis in a regime where agents have a utility-based notion of preference. We consider a dynamic model where agents only have access to an approximation of their utility that satisfies a universal multiplicative error bound. This bound decays over time by a factor of $D(t)$ as the agents communicate and learn about their preferences, and scales as the size of the market increases by a factor of $H(n)$ as the task of preference learning becomes more difficult. We apply guarantees from the theoretical computer science literature on low-distortion embeddings of finite metric spaces to understand the communication requirements of stable matching in large markets in terms of their structural properties. Our results show that for a broad family of markets, the scale factor $H(n)$ may not grow faster than $n^2\log(n)$ while maintaining a deterministic guarantee on the behavior of stable matching mechanisms in the limit. We also show that a stronger probabilistic guarantee may be made so long as $H(n)$ grows at most logarithmically in the underlying topological complexity of the market. 

\end{abstract}
\begin{document}

\begin{titlepage}

\maketitle

\end{titlepage}

\section{Introduction}\label{sec:intro}
Many real-world matching processes, such as matching students to schools, residents to hospitals, or law clerks to judges, can be classified as \textit{two-sided markets}. In these markets, participants are partitioned into two blocks (abstractly referred to as \textit{men} and \textit{women}), where each member of one block has some well-defined notion of preference over members in the other block. In their seminal paper, Gale and Shapley introduced the deferred acceptance (DA) algorithm to (in $\Theta(n^2)$ time, where $n$ denotes the number of members in each block) find so-called \textit{stable matches} in two-sided markets \cite{gale1962college}. These matches are bijections between the men and women such that no couple mutually prefers to be matched over their assigned partners. Experimental evidence has demonstrated that in a variety of real-world markets (including the three previously mentioned), providing a guarantee of this mathematical property can prevent several common market failures from occurring \cite{roth2008deferred, abdulkadirouglu2005new, avery2001market}. This discovery has led to the wide-spread adoption of stable-matching mechanisms \cite{roth2008deferred}.

The DA algorithm may be regarded as ``efficient'' insofar as it runs in polynomial time with respect to the number of agents in the market. However, it's not immediately obvious whether this algorithmic efficiency translates into an efficient clearing process in real-world markets. In markets such as the school choice, residency, and law clerk markets mentioned above, the time it takes for the mechanism to clear is limited primarily by the cost of interaction between the agents rather than the computational complexity of the matching algorithm. In the process of forming preferences, agents may have to tour schools, visit hospitals, or complete time-intensive interviews. Indeed, in larger markets, the cost of ranking every agent on the other side of the market may be impractically high. Yet, the classical DA algorithm requires such complete preferences as input. A natural question to ask, therefore, is whether a stable matching may be found (or at least verified) in a more communication-efficient way.

This problem has been historically studied through the lens of communication complexity, where the minimum number of queries necessary to find or verify a stable matching is analyzed given a specified communication model for the agents in the market. Each agent is assumed to privately hold a strict preference ranking across the agents on the other side of the market. One of the first such analyses is due to Ng and Hirschberg, who showed that $\Theta(n^2)$ queries are required to find or verify a stable matching in a model where a central party may query for (a) a man $m$'s ranking of a woman $w$, and (b) the woman $w$ that man $m$ places at rank $k$ (or vice-versa) \cite{ng1990lower}. Chou and Lu extend this analysis and show that even if the central party were able to separately query for each of the bits in the integer responses for (a) and (b), $\Theta(n^2 \log n)$ Boolean queries are required to find a stable matching \cite{chou2010communication}. Further analyses by Segal \cite{segal2007communication} and Gonczarowski et al.\cite{gonczarowski2014stable} show that under \textit{any} Boolean framework, $\Omega(n^2)$ queries are required to find or verify an (approximate) stable matching, even when one is allowed to use nondeterministic or randomized protocols. Table \ref{tab:results} gives an overview of these results. 

\begin{table}[h!] 
\centering
\resizebox{0.8\columnwidth}{!}{\begin{minipage}{\textwidth}\begin{center}

    \begin{tabular}{|c||c|c|c|}
        \hline
         & Query Model & Protocol Type & Bits Required  \\
        \hline
        \hline
        \cite{ng1990lower} & Queries for (a) and (b)  & Deterministic & $\Theta(n^2)$ \\
        \hline
        \cite{segal2007communication} & Any Boolean queries & (Non)deterministic & $\Omega(n^2)$ \\
        \hline
        \cite{chou2010communication} & Queries with $\log n$ bit responses & Approximate & $\Omega(n^2 \log n)$ \\
        \hline
        \cite{gonczarowski2014stable} & Any Boolean queries & Randomized or Approximate & $\Omega(n^2)$ \\
        \hline
        \cite{ashlagi2017communication} & \begin{tabular}{@{}c@{}} Any Boolean queries, and\\agents have some prior knowledge\end{tabular} & Randomized & $\tilde O(n^{3/2})$ \\
        \hline 
    \end{tabular}\end{center}\caption{\label{tab:results}A summary of the results on the communication complexity of stable matching. The notation $\tilde O$ hides $\log$ factors in $n$. A randomized protocol must give a stable match with probability at least $\frac23$. An approximate protocol must give a match that \textit{almost} stable in some well-defined sense. Different notions of approximate stability are used in \cite{gonczarowski2014stable} and \cite{chou2010communication}, so these results are not directly comparable. The terms ``nondeterministic'' and ``randomized'' have different meanings in the communication complexity literature. }\end{minipage}
    }
    \vspace{-1.5em}
\end{table}

The results in Table \ref{tab:results} demonstrate that essentially no asymptotic improvement may be made to the DA algorithm to reduce the communication complexity: all but a constant fraction of the information contained in the full preferences must be communicated in order to find or verify a stable matching in the worst case. Thus, they provide theoretical evidence that communication requirements for stable matching in large markets may be impractically high. 

These results, however. do not allow us to understand the communication requirements for stable matching in terms of some underlying structure in the market. We may not, for example, understand why communication requirements might differ in two markets that are of the same size through the communication complexity model. Further, in the economic interpretation of the communication complexity as a proxy for the amount of time agents must spend communicating in the market, one must assume assume a discrete model where agents are queried in individual time steps. It is not immediately clear that the impossibility results given in Table \ref{tab:results} carry over to the more realistic model where agents are continuously communicating. 

In this paper, we fill in these gaps in the literature by allowing agents to express strength of preference by means of a utility function. We develop a framework to study a continuous analogue of the communication complexity. We consider a dynamic model in which agents begin with a vague understanding of their true utility. As they communicate in the market for a longer amount of time, they learn about their preferences and recover successively more accurate approximations to their true utility. We consider both deterministic and probabilistic perturbations. More formally, we give each agent access to some perturbed utility function that approximates their true utility function up to a multiplicative error bound that depends on (i) the amount of time $t$ the agents have been communicating, and (ii) the size of the market $n$. As the amount of time $t$ the agents spend communicating increases, the error bound decays by a factor of $D(t)$ that increases monotonically over time. However, as the number of agents in the market increases, making a universal error guarantee on the full preferences becomes increasingly difficult. We model this added difficulty by allowing the error bound to scale proportionally to a nondecreasing \textit{hardness} function $H(n)$ that indicates how error increases as size of the market increases. In practical terms, the decay $D(t)$ indicates how fast agents in the market learn preferences, and the hardness $H(n)$ indicates the rate at which preference learning becomes more difficult as the size of the market increases. In our probabilistic model, we similarly consider random approximations to the agents' utility functions that abide by the multiplicative error bound in expectation. 

We then consider the \textit{communication requirement} for the stable matching market: the infimal amount of time that the agents must communicate in order to provide a guarantee that the male- and female-optimal stable matches under the perturbed utility functions are equal to those under the true utility functions. We study the conditions for communication requirements to stay finite in the limit as the size of the market tends to infinity. Indeed, if the communication requirement tends to infinity, this indicates that the stable matching mechanisms will not function properly in large markets: even if agents submit preference lists to the matching mechanism according to their current approximation of their utility, the output of the mechanism may not be stable according to the agents' true preferences! As it turns out, the conditions for communication requirements to stay finite in the limit depend only on the hardness function $H(n)$ -- we call such $H(n)$ that satisfy this property \textit{admissible}. The communication requirement, which serves as an analogue to the communication complexity, is closely related to the notion of \textit{robustness}: the supremal amount by which the true utility may be perturbed multiplicatively while maintaining a guarantee that with positive probability, the induced stable matches remain the same. Indeed, the robustness may also be used to understand the class of admissible $H(n)$ and is a central object of study in this paper.

The model described above resolves the discreteness issue that communication complexity suffers from. Next, we provide a condition on the agents' utilities that allows us to understand the market in terms of some additional underlying structure. This linear constraint, which we call \textit{polarity}, is likely to hold in markets with conventionally ``polarized'' preferences. Informally, we assert that if an agent strongly prefers one match candidate to another, then every other agent must strongly dislike at least one of the two candidates in question. We show that in polarized markets (and \textit{only} polarized markets), there exists an mapping of the agents into a finite metric space such that distance is equal to negative utility. Thus, these markets can be thought of as a generalization of well-studied spatial models of preference in the economics and political science literature \cite{hotelling1990stability, eguia2011foundations, bogomolnaia2007euclidean, davis1972social, enelow1984spatial, anshelevich2017randomized} where agents' preferences are given by Euclidean distances. We call any metric space satisfying this property a \textit{generating metric space} for the polarized market. In both our deterministic and probabilistic models, we prove upper bounds on the robustness of stable matching to deterministic and probabilistic perturbations. Our bounds are given in terms of (i) the size of a generating metric space (ii) the topological genus of a generating metric space, and (iii) the size of the market. Our bounds are logarithmic in the size and genus of the generating metric space, and roughly quadratic in the size of the market. These upper bounds correspond to lower bounds on the communication requirement and restrictions on the class of admissible hardness functions $H(n)$. 

Our results and methods, through the lens of matching theory, provide what we believe is a rich connection between spatial models of preference and sketching/embedding algorithms from the theoretical computer science literature \cite{johnson1984extensions, bourgain1985lipschitz, sidiropoulos2010optimal, fakcharoenphol2004tight}. Indeed, low-distortion embedding algorithms are the key tool that we use to generate our robustness bounds. Informally, these algorithms show that one may transform any arbitrary finite metric space into a restricted class of metric spaces without significantly altering the distances. These results have historically been used to speed up the running time of various linear algebra applications and combinatorial algorithms\footnote{We direct the reader to \cite{indyk2004low} for a survey of these techniques and their applications} \cite{cormode2005improved, drineas2001fast, erickson2012combinatorial}. However, we show that they also have an economic interpretation: by relating two classes of metric spaces, these results also relate classes of markets under a spatial model of preference. Thus, we use these results to generalize bounds we show in one class into bounds on the other class.

In Section \ref{sec:prelims} we define our model more formally, starting with the relevant definitions for one-sided markets and working our way towards the corresponding definitions for two-sided matching markets. We give both a deterministic and probabilistic model for the communication process. We also show how the aforementioned notions of the communication requirement, robustness, and admissibility are related. In Section \ref{sec:general}, we give more intuition for these concepts by deriving an explicit formula for the deterministic robustness, and proving a relationship between our deterministic and probabilistic models. In Section \ref{sec:polarization}, we define the polarity condition, and show that polarized markets (and only polarized markets) may be associated with a metric space in such a way that distance equals negative utility. We also give an economic interpretations for quantities pertaining to a generating metric space, such as its size and genus. In Section \ref{sec:robustness}, we show that if agents have Euclidean preferences, then the only admissible hardness function $H(n)$ is a constant function. That is, in Euclidean markets, preference learning may not get more difficult as the size of the market increases if we wish to obtain a guarantee on the behavior of stable matching in large markets. We then make use of Bourgain's embedding of arbitrary metric spaces into $\ell_2$ \cite{bourgain1985lipschitz} to generalize this statement to all polarized markets. In Section \ref{sec:probounds}, we make use of Sidiropolous's probabilistic embedding of arbitrary metric spaces into genus zero metric spaces \cite{sidiropoulos2010optimal} to make similar bounds in our probabilistic model. Finally, in Section \ref{sec:conclusion} we give an interpretation of our results and suggest related open problems.

\section{Preliminaries}\label{sec:prelims}

\subsection{One-Sided Markets}
In this section, we define the basic terminology that we will use for the remainder of the paper. We first introduce the relevant terminology for one-sided markets. We consider $n$ \textit{agents} who have preferences over $n$ \textit{alternatives}. An \textit{assignment} $\mu: [n] \leftrightarrow [n]$ is a bijection between the agents and alternatives. We denote by $\SR^n$ the set of all strict preference profiles (linear orderings) $n$ agents may have over $n$ alternatives. For $R = (R_a)_{a \in [n]} \in \SR^n$, we write $xR_ax'$ to indicate that agent $a$ prefers $x$ to $x'$.

We similarly allow agents to have utility-based descriptions of preference. In this paper, we assume the existence of a universal constant that upper bounds the utility any agent may receive from the assignment to an alternative. We define an agent's utility for an assignment to a given alternative as a (nonpositive) difference from this universal constant. Formally, an \textit{$n$-utility profile} $u: [n] \times [n] \mapsto \Rl$ associates each agent-alternative pair $(a,x) \in [n] \times [n]$ with a nonpositive number representing the utility $a$ receives under an assignment to $x$. If $u$ is \textit{strict} (i.e. if $u(a,x) = u(a,x') \iff x = x'$), then we refer to $\SR(u) \in \SR^n$ as the unique preference profile satisfying $x\SR(u)_ax' \iff u(a,x) > u(a,x')$.

\begin{definition}[Market Profile]
A market can be characterized by the collection of utility profiles that may possibly arise. Formally, we let an \textit{$n$-market profile} $U: \SR^n \to \Rl^{[n] \times [n]}$ be a map that gives a utility-based description of the agents' preferences given the ordinal description of their preferences. We enforce that for any $R \in \SR^n$, the utility profile $U[R]: [n] \times [n] \mapsto \Rl$ satisfies the property $\SR(U[R]) = R$.
\end{definition}

In our communication model, agents operate under a perturbed utility profile that becomes increasingly more accurate as more communication takes place. We now propose a model for such perturbations. We give two formulations: one deterministic and one probabilistic. In our deterministic formulation, the agents' original utility profile $u$ is replaced by a profile $\delta u$ that has been perturbed by a bounded multiplicative distortion $\delta$. We make the assumption that such distortions may never cause an agent to overestimate the utility they may receive from any assignment to an alternative. 

\begin{definition}[$C$-Perturbation]\label{def:cpert}
A $C$-perturbation is a map $\delta: [n] \times [n] \to \Rg$ such that for any $a, x \in [n]$
\[1 \le \delta(a,x) \le C\]
It then follows that for any utility profile $u: [n] \times [n] \to \Rl$ and any $C$-perturbation $\delta$, 
\[Cu(a,x) \le (\delta u)(a,x) \le u(a,x)\]
for any $a,x \in [n]$.
\end{definition}

In the probabilistic formulation, the agents' original profile $u$ is replaced by a random utility profile $\delta u$ that has been perturbed by a random multiplicative distortion $\delta$ that is bounded in expectation. 

\begin{definition}[Probabilistic $C$-Perturbation]\label{def:probcpert}
A probabilistic $C$-perturbation is a random map $\delta: [n] \times [n] \to \Rg$ such that the following two conditions hold for any $a, x \in [n]$
\[\delta(a,x) \ge 1 \qquad \ex{\delta(a,x)} \le C\]
It then follows that for any utility profile $u$ and any probabilistic $C$-perturbation $\delta$,
\[C u(a,x) \le \ex{(\delta u)(a,x)} \le u(a,x)\]
The random variables $\delta(a,x)$ may be arbitrarily distributed, and need not be independent with each other. 
\end{definition}

Note that any deterministic or probabilistic $C$-perturbation of a utility profile $\delta u$ must fix a utility of zero -- the highest utility possible in our framework -- and no such perturbation may change a nonzero value to zero. Thus, in both models, an agent $a$ receives the theoretical maximum utility under an assignment to $x$ if and only if $a$ is aware of this fact. Conversely, if $u'$ is any deterministic/random $n$-utility profile satisfying the properties $u'(a,x) = 0 \iff u(a,x) = 0$ and $u'(a,x) \le u(a,x)$, then it may be written as $\delta u$ for some deterministic/probabilistic $C$-perturbation $\delta$. 

\subsection{Two-Sided Markets}
We now use this terminology to define and characterize two-sided matching markets. We consider collections of $n$ \textit{men} and $n$ \textit{women} who have preferences over each other. Given preference profiles $R^M, R^W \in \SR^n$ defining the joint preferences each side of the market has over the other, a man $m$ and a woman $w$ form a \textit{blocking pair} in an assignment $\mu: [n] \leftrightarrow [n]$ if $wR^M_m\mu(m)$ and $mR^W_w\mu^{-1}(w)$. An assignment $\mu$ is \textit{stable} if no such blocking pairs exist. We let $\Phi(R^M,R^W) := (\mu_M, \mu_W)$ denote the \textit{deferred acceptance operator}, which takes as input preference profiles $R^M$ and $R^W$ and returns the male-optimal and female-optimal stable assignments $\mu_M$ and $\mu_W$ as would be returned by running the man-proposing and woman-proposing deferred acceptance algorithm (see \cite{gale1962college} for an exposition).

\begin{definition}[Matching Market]
An $n$-matching market $(U_M, U_W)$ consists of two market profiles $U_M, U_W : \SR^n \to \Rl^{[n] \times [n]}$, and gives a utility-based description of the agents' preferences on each side of the market given the ordinal description of their preferences. If the joint preferences of the $n$ men are given by a preference profile $R^M$, we denote by $U_M[R^M]$ their utility profile in this matching market. We similarly denote by $U_W[R^W]$ the utility profile of the $n$ women given that they have joint preference $R^W$. 
\end{definition}

We use this framework to characterize the robustness of a matching market to inaccuracies in preference learning -- the degree to which its utility profiles may be perturbed while still maintaining the possibility that the resulting pair of stable assignments given by $\Phi$ remains unchanged. As before, we give two characterizations based on Definitions \ref{def:cpert} and \ref{def:probcpert}.

\begin{definition}[Robustness]\label{def:rob}
An $n$-matching market $(U_M, U_W)$ is $C$-robust if for all men's and women's preference profiles $R^M, R^W \in \SR^n$ and $C$-perturbations $\delta_M, \delta_W$, \[\Phi(R^M,R^W) = \Phi(\SR(\delta_M U_M[R^M]), \SR(\delta_W U_W[R^W]))\] We define the \textit{robustness} $\xi_{U_M, U_W}$ of the matching market $(U_M, U_W)$ as 
\[\xi_{U_M, U_W} := \sup\pa{\set{C \ge 1 \mid (U_M, U_W)\text{ is $C$-robust}}}\]
\end{definition}

In other words, the robustness $\xi_{U_M, U_W}$ is the largest real number such that for every pair of preference profiles $R^M$ and $R^W$, we may perturb $U_M[R^M]$ and $U_W[R^W]$ by up to $\xi_{U_M, U_W}$ without changing the male-optimal and female-optimal stable assignments.

\begin{definition}[Probabilistic Robustness]\label{def:prob}
An $n$-matching market $(U_M, U_W)$ is probabilistically $C$-robust if for any joint distribution over preference profiles $R^M, R^W \in \SR^n$ and probabilistic $C$-perturbations $\delta_M, \delta_W$,
\[\pr{\Phi(R^M,R^W) = \Phi(\SR(\delta_M U_M[R^M]), \SR(\delta_W U_W[R^W]))} > 0\]
We define the \textit{probabilistic robustness} $\xi^P_{U_M, U_W}$ of the matching market $(U_M, U_W)$ as 
\[\xi^P_{U_M, U_W} := \sup\pa{\set{C \ge 1 \mid (U_M, U_W) \text{ is probabilistically $C$-robust}}}\]
\end{definition}

Similarly, the probabilistic robustness $\xi^P_{U_M, U_W}$ is the largest real number such that for any random initialization of the market, we may perturb $U_M[R^M]$ and $U_W[R^W]$ by up to $\xi_{U_M, U_W}^P$ in expectation and it will remain possible for the male-optimal and female-optimal stable assignments to stay unchanged. 

\subsection{Modeling the Preference Learning Process}\label{subsec:model}
We consider a dynamic model for the preference learning process where the agents at time $t$ operate in an $n$-matching market $(\delta_M U_M, \delta_W U_W)$ where $\delta_M$ and $\delta_W$ are each $C(n,t)$-perturbations. In other words, we assert that in a matching market of size $n$, agents recover their true utility up to a multiplicative factor of $C(n,t)$ within $t$ time. We further assume that $C(n,t)$ takes a particular form:
\begin{equation}
C(n,t) := \frac{H(n)}{D(t)} \qquad \qquad {H: \N \to \R_{> 0} \atop D: \R \to \R_{> 0}}
\end{equation}
The term $H(n)$ is a non-decreasing function that indicates how the hardness of preference learning increases with the size of the matching market. The term $D(t)$ is a monotonically increasing decay factor that indicates how the agents learn better approximations to their true utility over time. We further assume that $D$ is continuous and that $\lim_{t \to \infty}D(t) = \infty$. That is, as time tends to infinity, the agents recover their true utility up to an arbitrarily small multiplicative error. 

We now define the \textit{communication requirement} of an $n$-matching market $(U_M, U_W)$: our analogue to the communication complexity when preferences are given by a real-valued utility model. 
\begin{definition}[Communication Requirement]
The communication requirement $T_{(U_M, U_W)}$ of an $n$-matching market $(U_M, U_W)$ is given by the infimum
\[T_{(U_M, U_W)} := \inf\pa{\set{t > 0 \mid (U_M, U_W) \text{ is $C(n,t)$-robust}}}\]
We similarly define the probabilistic communication requirement $T^P_{(U_M, U_W)}$ as the infimum
\[T_{(U_M, U_W)}^P := \inf\pa{\set{t > 0 \mid (U_M, U_W) \text{ is probabilistically $C(n,t)$-robust}}}\]
\end{definition}
The communication requirement is the earliest time at which we may guarantee that the male- and female-optimal stable assignments under the perturbed utilities are equal to those under the true utility. In the probabilistic case, we only require that such a guarantee can be made with positive probability. Conversely, if such stable assignments are computed prior to the communication requirement, we may guarantee that these assignments are almost surely \textit{not} stable with respect to the true preferences of the agents. Compare this with the communication complexity, which is given by the fewest number times one needs to query the agents in the market in order to make a similar guarantee. By associating a time cost to each query, one may translate some analogue of the communication complexity results given in Table \ref{tab:results} into bounds on a communication requirement. 

The communication requirement is closely related to the robustness quantities given in Definitions \ref{def:rob} and \ref{def:prob}. The following proposition makes this relationship concrete
\begin{proposition}\label{prop:rob}
Let $(U_M, U_W)$ be an $n$-matching market. Then, the following hold:
\[T_{(U_M, U_W)} = D^{-1}\pa{\frac{H(n)}{\xi_{U_M, U_W}}} \qquad \qquad T^P_{(U_M, U_W)} = D^{-1}\pa{\frac{H(n)}{\xi^P_{U_M, U_W}}}\]
\end{proposition}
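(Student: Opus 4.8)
The plan is to show that, up to a single immaterial boundary point, the set of times at which $(U_M, U_W)$ is $C(n,t)$-robust is the ray $[D^{-1}(H(n)/\xi_{U_M, U_W}),\, \infty)$, so that taking the infimum yields the stated formula for $T_{(U_M, U_W)}$; the identity for $T^P_{(U_M, U_W)}$ then follows from the identical argument with $\xi^P_{U_M, U_W}$ in place of $\xi_{U_M, U_W}$.

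First I would record a monotonicity lemma: if $(U_M, U_W)$ is $C$-robust and $1 \le C' \le C$, then it is $C'$-robust. This is immediate from Definition~\ref{def:cpert}, since every $C'$-perturbation is also a $C$-perturbation; the analogous statement holds in the probabilistic setting by Definition~\ref{def:probcpert}, since $\ex{\delta(a,x)} \le C' \le C$ forces a probabilistic $C'$-perturbation to be a probabilistic $C$-perturbation. Combined with the trivial fact that every market is $1$-robust — a $1$-perturbation is the identity map, and market profiles satisfy $\SR(U_M[R^M]) = R^M$ by definition — this shows that $\set{C \ge 1 \mid (U_M, U_W) \text{ is } C\text{-robust}}$ is an interval with left endpoint $1$ and supremum $\xi_{U_M, U_W}$; in particular $(U_M, U_W)$ is $C$-robust for every $C < \xi_{U_M, U_W}$ and for no $C > \xi_{U_M, U_W}$, and likewise in the probabilistic case with $\xi^P_{U_M, U_W}$.

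Next I would translate this into a condition on $t$ via the structure $C(n,t) = H(n)/D(t)$. Since $D$ is continuous and strictly increasing with $D(t) \to \infty$, it maps its domain bijectively onto an interval unbounded above, so $D^{-1}$ is a well-defined increasing function there, and in the regime of interest it applies to $H(n)/\xi_{U_M, U_W}$. Because $D$ is increasing, $C(n,t)$ is strictly decreasing in $t$, and $C(n,t) < \xi_{U_M, U_W}$ exactly when $D(t) > H(n)/\xi_{U_M, U_W}$, i.e. when $t > D^{-1}(H(n)/\xi_{U_M, U_W})$; symmetrically for the reversed inequalities. Feeding this into the monotonicity lemma, $(U_M, U_W)$ is $C(n,t)$-robust whenever $t > D^{-1}(H(n)/\xi_{U_M, U_W})$ and fails to be $C(n,t)$-robust whenever $t < D^{-1}(H(n)/\xi_{U_M, U_W})$. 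Hence, regardless of whether the market happens to be exactly $\xi_{U_M, U_W}$-robust, the set $\set{t > 0 \mid (U_M, U_W) \text{ is } C(n,t)\text{-robust}}$ coincides with $(D^{-1}(H(n)/\xi_{U_M, U_W}),\, \infty)$ except possibly at the single point $D^{-1}(H(n)/\xi_{U_M, U_W})$, and therefore has infimum $D^{-1}(H(n)/\xi_{U_M, U_W})$. The probabilistic identity is obtained verbatim, replacing $\xi_{U_M, U_W}$ by $\xi^P_{U_M, U_W}$ and invoking the probabilistic monotonicity lemma.

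The step I expect to need the most care — though it is shallow — is precisely this boundary at $C = \xi_{U_M, U_W}$: the supremum in Definition~\ref{def:rob} need not be attained, so one may not simply assert that $(U_M, U_W)$ is $\xi_{U_M, U_W}$-robust. The resolution is the observation above, namely that this ambiguity can only move the single point $D^{-1}(H(n)/\xi_{U_M, U_W})$ into or out of the admissible-time set, which by continuity of $D$ leaves its infimum unchanged. One should also dispose of the degenerate cases — $\xi_{U_M, U_W} = \infty$, or $H(n)/\xi_{U_M, U_W}$ lying outside the range of $D$ so that every positive time is admissible — under the natural conventions for $D^{-1}$; these do not arise for the markets considered later in the paper.
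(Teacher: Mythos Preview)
Your proposal is correct and follows essentially the same approach as the paper, which gives only a one-line sketch noting that by the monotonicity and continuity of $D$ one has $C(n,T_{(U_M,U_W)})=\xi_{U_M,U_W}$ and hence may invert. Your version is simply a more careful expansion of that sketch, handling the monotonicity lemma and the boundary point explicitly.
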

The proof follows by noting that at the communication requirement, the value of $C(n,T_{(U_M, U_W)})$ is equal to the robustness, due to the monotonicity of $D$. As $D$ is monotonic and continuous, it is also invertible, whence we arrive at the desired result. The argument is symmetric for the probabilistic case.

Proposition \ref{prop:rob} demonstrates that for any finite $n$, the (probabilistic) communication requirement of any $n$-matching market $(U_M, U_W)$ must be finite. This follows from the fact that all matching markets are (probabilistically) $1$-robust. However, the behavior of the communication requirement in the limit depends on the hardness function $H(n)$, and has tangible economic significance. Indeed, if communication requirements can become arbitrarily large in large markets, then for such markets, the assignments given by the DA algorithm will not be stable! We call hardness functions that give finite communication requirements in the limit \textit{admissible}.

\begin{definition}[Admissibility]
Let $\set{(U_M, U_W)_n}_{n=1}^\infty$ be a collection of matching markets of increasing size, where $(U_M, U_W)_n$ is an $n$-matching market. A hardness function $H(n)$ is \textit{admissible}/\textit{probabilistically admissible} for this collection if
\[\lim_{n \to \infty} T_{(U_M, U_W)_n} < \infty \qquad \qquad \lim_{n \to \infty} T^P_{(U_M, U_W)_n} < \infty\]

Just as Proposition \ref{prop:rob} shows that communication requirements may be understood in terms of the robustness, we show that admissibility can also be understood in this way
\begin{proposition}\label{prop:adm}
Suppose that $H(n)$ is admissible/probabilistically admissible for the collection of matching markets $\set{(U_M, U_W)_n}_{n=1}^\infty$. Then, $H(n) = O\pa{\xi_{(U_M, U_W)_n}}$/$H(n) = O\pa{\xi^P_{(U_M, U_W)_n}}$.
\end{proposition}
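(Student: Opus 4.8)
The plan is to read off the bound directly from Proposition~\ref{prop:rob}, which already expresses the communication requirement in terms of the robustness. Fix the collection $\set{(U_M,U_W)_n}_{n=1}^\infty$ and abbreviate $\xi_n := \xi_{(U_M,U_W)_n}$ and $T_n := T_{(U_M,U_W)_n}$. Since every matching market is $1$-robust we have $\xi_n \ge 1$, so $H(n)/\xi_n$ is a genuine positive real (it lies in $(0,H(n)]$). By Proposition~\ref{prop:rob}, $T_n = D^{-1}\pa{H(n)/\xi_n}$, and applying $D$ to both sides gives the identity $D(T_n) = H(n)/\xi_n$ that I would use as the workhorse.

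Next I would invoke admissibility to bound $\set{T_n}$ from above by a constant. By hypothesis $\lim_{n\to\infty}T_n =: \ell < \infty$, so all but finitely many $T_n$ lie below $\ell+1$; since each individual $T_n$ is finite (every finite market has finite communication requirement, as noted after Proposition~\ref{prop:rob}), the whole sequence is bounded above by some finite constant $M$. Because $D$ is monotonically increasing, $T_n \le M$ forces $D(T_n) \le D(M)$, i.e.\ $H(n)/\xi_n \le D(M)$, hence $H(n) \le D(M)\,\xi_n$ for every $n$. As $D(M)$ does not depend on $n$, this is precisely $H(n) = O\pa{\xi_n}$. The probabilistic claim follows by the identical argument with $\xi_n$ replaced by $\xi^P_n := \xi^P_{(U_M,U_W)_n}$ and $T_n$ by $T^P_n$, using the second identity in Proposition~\ref{prop:rob} and the fact that every market is probabilistically $1$-robust (so $\xi^P_n \ge 1$).

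I do not expect a genuine obstacle here; the statement is essentially a transcription of Proposition~\ref{prop:rob} under the monotonicity of $D$. The only point needing a little care is the boundary behavior of $D^{-1}$: the identity $T_n = D^{-1}\pa{H(n)/\xi_n}$ must be read correctly when $H(n)/\xi_n$ falls below $\inf D$ and is thus outside the open range of $D$. This degenerate regime is harmless, since there $T_n = 0$ and the desired inequality $H(n) \le D(M)\xi_n$ holds trivially. If one prefers to sidestep $D^{-1}$ entirely, one can argue from the definition $T_n = \inf\set{t>0 : D(t) \ge H(n)/\xi_n}$ directly: the set $\set{t>0 : D(t) \ge H(n)/\xi_n}$ is an up-set in $t$ (as $D$ is non-decreasing), so $T_n \le M$ implies every $t > M$ satisfies $D(t) \ge H(n)/\xi_n$, and taking $t = M+1$ yields $H(n) \le D(M+1)\,\xi_n$, the same conclusion with a different explicit constant.
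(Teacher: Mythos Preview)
Your proposal is correct and follows essentially the same route as the paper: both invoke Proposition~\ref{prop:rob} to write $T_n = D^{-1}\!\pa{H(n)/\xi_n}$ and then use the monotonicity of $D$ (equivalently, of $D^{-1}$ together with $\lim_{c\to\infty}D^{-1}(c)=\infty$) to convert finiteness of $\lim T_n$ into boundedness of $H(n)/\xi_n$. If anything, your version is slightly more careful about the boundary behavior of $D^{-1}$ than the paper's own proof.
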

\begin{proof}
We show that $H(n) = O\pa{\xi_{(U_M, U_W)_n}}$. The argument for probabilistic case is symmetric. By Proposition \ref{prop:rob}, we have that
\[\lim_{n \to \infty} T_{(U_M, U_W)_n} = \lim_{n \to \infty} D^{-1}\pa{\frac{H(n)}{\xi_{(U_M, U_W)_n}}}\]
As $D$ is monotonically increasing, so too is $D^{-1}$. Further, as $\lim_{t \to \infty} D(t) = \infty$, we must also have that $\lim_{c \to \infty}D^{-1}(c) = \infty$. It then follows that the above limit is finite if
\[\lim_{n \to \infty} \frac{H(n)}{\xi_{(U_M, U_W)_n}} < \infty \implies H(n) = O\pa{\xi_{(U_M, U_W)_n}}\]
\end{proof}

Thus, for the remainder of this paper, we focus our attention on upper bounding the robustness quantities $\xi_{U_M, U_W}$ and $\xi^P_{U_M, U_W}$. Such bounds, by means of Propositions \ref{prop:rob} and \ref{prop:adm}, immediately induce corresponding lower bounds on the communication requirement and describe the admissible class of hardness functions for a given countable collection of matching markets. 
\end{definition}

\section{Warming Up: Understanding the Basic Properties of Robustness}\label{sec:general}

In this section, we give an explicit formula for the robustness $\xi_{U_M, U_W}$ of an $n$-matching market and prove a relationship between the robustness $\xi_{U_M, U_W}$ and the probabilistic robustenss $\xi^P_{U_M, U_W}$. Our arguments make use of the following fact about stable assignments.

\begin{lemma}\label{lem:ne}
Let $R, R' \in \SR^n$. If $R \ne R'$, then there exist some $R^M,R^W \in \SR^n$ such that $\Phi(R, R^W) \ne \Phi(R', R^W)$ and similarly $\Phi(R^M,R) \ne \Phi(R^M,R')$.
\end{lemma}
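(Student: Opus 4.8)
The plan is to reduce to the first assertion and then build a distinguishing women's profile by hand. By the symmetry of the deferred-acceptance operator under swapping the two sides of the market (and correspondingly swapping $\mu_M$ and $\mu_W$ in the output of $\Phi$), it suffices to show: if $R\ne R'$ then there is a women's profile $R^W$ with $\Phi(R,R^W)\ne\Phi(R',R^W)$. Since $R\ne R'$, I would first extract the disagreement as a single reversed comparison: there is a man $a$ and women $w,w'$ with $wR_aw'$ but $w'R'_aw$.

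For the witness I would take $R^W$ to be any women's profile in which both $w$ and $w'$ rank $a$ first, while every other woman ranks $a$ last (the remaining entries being immaterial). The crucial design choice — and the point I would emphasize, since the naive attempt fails — is to track the \emph{female}-optimal component of $\Phi$ rather than the male-optimal one: because we only get to choose the women's preferences and the men's preferences recorded in $R$ and $R'$ are adversarial, there is no hope of pinning down $a$'s partner in the man-proposing outcome, but the woman-proposing outcome can be controlled through $w$'s own list. Concretely, the plan is to prove (i) in the market $(R,R^W)$ the female-optimal stable assignment matches $a$ to $w$, and (ii) in the market $(R',R^W)$ it does not; together these immediately give $\Phi(R,R^W)\ne\Phi(R',R^W)$.

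Step (ii) is the easy direction: if some stable assignment of $(R',R^W)$ matched $a$ to $w$, then $w'$ would be matched to some man $m'\ne a$; since $w'$ ranks $a$ first and $a$ prefers $w'$ to $w$ under $R'$, the pair $(a,w')$ would block, a contradiction. Step (i) is the main obstacle and where I expect the real work. I would show that $a$ is a \emph{stable partner} of $w$ in $(R,R^W)$ by exhibiting a stable assignment containing the pair $(a,w)$ — namely the assignment that pairs $a$ with $w$ and restricts to any stable assignment on the remaining $n-1$ men and $n-1$ women — and checking that the only candidate blocking pairs are ruled out: $(a,w')$ is ruled out by $wR_aw'$; $(a,v)$ for $v\notin\{w,w'\}$ by the fact that $v$ ranks $a$ last; and $(m,w)$ for $m\ne a$ by the fact that $w$ ranks $a$ first. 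Since $w$'s most preferred man overall is $a$, and the female-optimal stable assignment gives every woman her favorite stable partner, $a$ must be $w$'s partner there, completing step (i). The second statement of the lemma then follows by running the entire argument with the two sides of the market interchanged, using the male-optimal component of $\Phi$ in place of the female-optimal one.
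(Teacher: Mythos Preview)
Your argument is correct and takes a genuinely different route from the paper's. Both constructions make $w$ and $w'$ rank the distinguished man $a$ first, but the paper additionally pins down the rest of $R^W$ so that each remaining woman $w_i$ ranks a dedicated man $m_i$ first (and $w,w'$ rank a fixed $m_2$ second); it then simply traces two rounds of woman-proposing deferred acceptance and reads off that $a$'s partner flips between $w$ and $w'$. You instead leave $R^W$ essentially free beyond requiring the other women to put $a$ last, build a witness stable matching by gluing $(a,w)$ onto any stable matching of the $(n-1)$-by-$(n-1)$ submarket, and invoke the ``each woman gets her best stable partner'' characterization of the female-optimal output. Your blocking-pair enumeration is complete --- the residual case $m\ne a$, $v\ne w$ is handled because the blocking condition there involves only agents and partners lying in the submarket, where stability was assumed. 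The paper's version is shorter and fully self-contained (it never appeals to submarket stability or the optimality theorem for deferred acceptance); yours is more structural and does not depend on a particular enumeration of the remaining agents, which would make it easier to transplant into settings where such an enumeration is inconvenient.
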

\begin{proof}
We show the existence of an $R^W \in \SR^n$ such that $\Phi(R, R^W) \ne \Phi(R', R^W)$. The argument is symmetric in the other case. As $R \ne R'$, there exists some man $m_1$ and women $w_1$ and $w_2$ such that $w_1R_{m_1}w_2$, but $w_2R'_{m_1}w_1$. Let $R^W$ be any preference profile where
\[R^W: = \begin{array}{|c|c|c|c|c|}
\hline
w_1 & w_2 & w_3 &\hdots & w_n \\
\hline
\hline
m_1 & m_1 & m_3 & \hdots & m_n \\
m_2 & m_2 & * & \hdots & * \\
* & * & * & \hdots & * \\
\vdots & \vdots & \vdots & \vdots & \vdots \\
\hline
\end{array}\]
We show that the female-optimal assignment $\mu_W$ under the preference profiles $R$ and $R^W$ differs from the female-optimal assignment $\mu_W'$ under the preference profiles $R'$ and $R^W$. Note that by construction, all women except $w_1$ and $w_2$ are matched to their top choice man in the first iteration of the deferred acceptance algorithm. Since $w_1R_{m_1}w_2$, in the assignment $\mu_W$, $m_1$ is matched to $w_1$ and $m_2$ is matched to $w_2$. However, as $w_2R'_{m_1}w_1$, in the assignment $\mu_W'$, $m_1$ is matched to $w_2$ and $m_2$ is matched to $w_1$. Thus, $\Phi(R, R^W) \ne \Phi(R', R^W)$.
\end{proof}

We now use this fact to show that an $n$-matching market is $C$-robust if and only if the strength of any given agent's preference for a match candidate is exponential in that candidate's ordinal ranking. In other words, the market is $C$-robust iff it is impossible to perturb the ordinal preferences of any agent by distorting their utility multiplicatively by a factor of $C$. This result is similar in nature to the negative results in the communication complexity literature: it demonstrates that one may not make a guarantee on stable assignments without making a guarantee on the full preferences of the agents.

\begin{theorem}\label{thm:c}
An $n$-matching market $(U_M, U_W)$ is $C$-robust if and only if for all preference profiles $R \in \SR^n$ and all $m.w,w' \in [n]$, $wR_mw' \implies CU_M[R](m,w) > U_M[R](m,w')$ and similarly for $U_W[R]$.
\end{theorem}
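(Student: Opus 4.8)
The plan is to prove both directions of the equivalence, with the reverse direction ($\Leftarrow$) being the more routine one and the forward direction ($\Rightarrow$) requiring Lemma \ref{lem:ne}. Throughout, I would work with the men's side only; the women's side is symmetric, and $C$-robustness requires the conclusion on both sides, so it suffices to handle each side independently.

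For the reverse direction, suppose the exponential-gap condition holds: for every $R \in \SR^n$ and all $m, w, w'$ with $wR_mw'$, we have $CU_M[R](m,w) > U_M[R](m,w')$. Fix any preference profiles $R^M, R^W$ and any $C$-perturbations $\delta_M, \delta_W$. I would first argue that $\SR(\delta_M U_M[R^M]) = R^M$. Indeed, for any man $m$ and any $w, w'$ with $wR^M_mw'$, Definition \ref{def:cpert} gives $CU_M[R^M](m,w) \le (\delta_M U_M[R^M])(m,w) \le U_M[R^M](m,w)$ and $(\delta_M U_M[R^M])(m,w') \ge U_M[R^M](m,w')$ — wait, one must be careful with signs since utilities are nonpositive. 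The correct chain is $(\delta_M U_M[R^M])(m,w) \ge CU_M[R^M](m,w) > U_M[R^M](m,w') \ge (\delta_M U_M[R^M])(m,w')$, using the hypothesis in the middle and Definition \ref{def:cpert} on both ends. Hence $w$ is still preferred to $w'$ under the perturbed utility, so the induced ordinal preference is unchanged: $\SR(\delta_M U_M[R^M]) = R^M$. Symmetrically $\SR(\delta_W U_W[R^W]) = R^W$, and therefore $\Phi$ applied to the perturbed profiles equals $\Phi(R^M, R^W)$, giving $C$-robustness.

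For the forward direction, I would argue the contrapositive: suppose the exponential-gap condition fails, so there is a preference profile $R \in \SR^n$, a man $m$, and women $w, w'$ with $wR_mw'$ but $CU_M[R](m,w) \le U_M[R](m,w')$. The goal is to build a $C$-perturbation $\delta_M$ (and the identity perturbation $\delta_W$) together with preference profiles witnessing a violation of Definition \ref{def:rob}. The key move is to define $\delta_M$ so that it collapses (or inverts) the preference of $m$ between $w$ and $w'$ while leaving everything else essentially untouched: scale $U_M[R](m,w)$ down by a factor close to $C$ so that the perturbed utility for $w$ drops to (just below) that of $w'$, making $w'$ now weakly preferred; one has to check this is achievable within the multiplicative budget $[1,C]$, which is exactly what the inequality $CU_M[R](m,w) \le U_M[R](m,w')$ guarantees (with a tiny extra perturbation of the $w'$ entry, or by noting the supremum in the definition of robustness is not attained, to get a strict inversion). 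This produces a perturbed profile $\SR(\delta_M U_M[R]) = R' \ne R$. Now invoke Lemma \ref{lem:ne}: since $R' \ne R$, there exists $R^W \in \SR^n$ with $\Phi(R, R^W) \ne \Phi(R', R^W) = \Phi(\SR(\delta_M U_M[R]), \SR(\delta_W U_W[R^W]))$ (taking $\delta_W$ to be the trivial perturbation so $\SR(\delta_W U_W[R^W]) = R^W$). This exhibits a failure of $C$-robustness, completing the contrapositive.

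The main obstacle I anticipate is the sign bookkeeping in the forward direction: because utilities are nonpositive and perturbations multiply by factors $\ge 1$, "perturbing down" means making the utility \emph{more negative}, and one must confirm that the single entry $U_M[R](m,w)$ can be pushed past $U_M[R](m,w')$ without needing a factor exceeding $C$, and that doing so does not accidentally disturb the relative order of $w$ (or $w'$) against the \emph{other} alternatives — one may need to perturb several of $m$'s entries simultaneously, or argue that only the $(m,w)$ entry needs changing because $w$ was already adjacent to $w'$ in the ranking (if $w$ and $w'$ are not adjacent, pick the specific pair exhibiting the minimal violating ratio, or handle intermediate alternatives by scaling them proportionally). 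A clean way to sidestep this is to choose $w, w'$ to be \emph{consecutive} in $R_m$ at the point where the gap condition first fails, so that inverting just this one adjacent pair yields a valid $R' \ne R$; then Lemma \ref{lem:ne} finishes the job. The rest is essentially unwinding Definition \ref{def:cpert} and Definition \ref{def:rob}.
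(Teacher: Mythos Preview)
Your proposal is correct and follows essentially the same approach as the paper: the reverse direction is identical, and for the forward direction both you and the paper take the contrapositive, build an explicit $C$-perturbation that flips one pairwise comparison, and then invoke Lemma~\ref{lem:ne}.

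One simplification worth noting: your worry about ``accidentally disturbing the relative order of $w$ (or $w'$) against the other alternatives'' and the suggested fix of choosing an adjacent pair are unnecessary. The paper simply sets $\delta_M(m^*,w_1)=C$ and $\delta_M\equiv 1$ elsewhere; this forces $w_2$ to be preferred to $w_1$ in $\SR(\delta_M U_M[R])$, hence $\SR(\delta_M U_M[R])\ne R$, and that alone is enough for Lemma~\ref{lem:ne}. It does not matter what the perturbation does to the rest of $m^*$'s ranking --- any $R'\ne R$ suffices. So you can drop the adjacency argument and the concern about intermediate alternatives entirely.
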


\begin{proof}
We first prove the ``if'' direction. Suppose that for all $R \in \SR^n$ and $m,w,w' \in [n]$, $wR_mw' \implies CU_M[R](m,w) > U_M[R](m,w')$. As a $C$-perturbation $\delta$ may only distort utility profiles up to a multiplicative factor of $C$, the ordinal data contained in $\delta U_M[R]$ must be preserved for any $R$. That is, $\SR(\delta U_M[R]) = R$ for any $C$-perturbation $\delta$ and $R \in \SR^n$. By symmetry, it follows that $\Phi(R^M,R^W) = \Phi(\SR(\delta_M U_M[R^M]), \SR(\delta_W U_W[R^W]))$ for any $C$-perturbations $\delta_M$ and $\delta_W$ and preference profiles $R^M, R^W \in \SR^n$, so $(U_M, U_W)$ is $C$-robust. 

We now show the ``only if'' direction by contradiction. Without loss of generality, suppose that for some preference profile $R \in \SR^n$, there exists some man $m^* \in [n]$ and women $w_1, w_2 \in [n]$ such that $w_1R_{m^*}w_2$, but $CU_M[R](m^*,w_1) < U_M[R](m^*,w_2)$. Let $\delta$ be the $C$-perturbation given by
\[\delta(m,w) = \begin{cases}C & \text{if } m = m^*, w = w_1 \\ 1 & \text{else}\end{cases}\]
Observe that $R \ne \SR(\delta U_M[R])$, since $w_2\SR(\delta U_M[R])_{m^*}w_1$. Thus, by Lemma \ref{lem:ne}, there exists some $R^W \in \SR^n$ such that $\Phi(R, R^W) \ne \Phi(\SR(\delta U_M[R]), \SR(U_W[R^W]))$. It follows that $(U_M, U_W)$ is not $C$-robust.
\end{proof}

Using Theorem \ref{thm:c}, we may directly solve for the robustness $\xi_{U_M, U_W}$ of an $n$-matching market.
\begin{equation}\label{eq:robust}
\xi_{U_M, U_W} = \min_{R \in \SR^n} \pa{ \min_{m,w,w' \in [n] \mid wR_mw'}\pa{\frac{U_M[R](m,w')}{U_M[R](m,w)}},  \min_{w,m,m' \in [n] \mid mR_wm'}\pa{ \frac{U_W[R](m',w)}{U_W[R](m,w)}}}
\end{equation}
By applying Proposition \ref{prop:rob}, we obtain an equation for the communication requirement. This equation partially confirms the intuition that large markets are less robust, as the outer minimum is being taken over the set $\SR^n$ whose size increases factorially in $n$. It is, however, difficult to extract more information out of the equation as we have placed no further assumptions on the matching market. In the next section, we show that a single linear constraint on the agents' utilities allows us to extract far more information. 

A similar formula for the probabilistic robustness of an $n$-matching market is far more difficult to derive. The deterministic case was easier to analyze due to the fact that each entry of a $C$-perturbation $\delta$ may be independently set between $1$ and $C$. In the probabilistic case, the entries of the perturbation may depend on each other. Further, some entries may far exceed $C$, so long as they are bounded in expectation. 

We can, however, understand the probabilistic robustness as it relates to the (deterministic) robustness. Clearly, we must have that $\xi_{U_M, U_W}^P \le \xi_{U_M, U_W}$ for any $n$-matching market $(U_M, U_W)$, since any $C$-perturbation is also a probabilistic $C$-perturbation. A natural question to ask, therefore, is whether one may show a similar upper bound on the probabilistic robustness in terms of the robustness. We resolve this question in the positive, and show that $\xi_{U_M, U_W} = O(n^2)\xi_{U_M, U_W}^P$ is a tight bound. 
\begin{theorem}\label{thm:ub}
If $(U_M, U_W)$ is a $\pa{2n(n-1)(C-1) + 1}$-robust $n$-matching market, then it is probabilistically $C$-robust, but not necessarily probabilistically $(1+\epsilon)C$-robust for any $\epsilon > 0$.
\end{theorem}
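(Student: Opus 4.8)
The plan is to prove the two halves separately, both leaning on Theorem~\ref{thm:c}. Write $C' := 2n(n-1)(C-1)+1$. By Theorem~\ref{thm:c}, $(U_M,U_W)$ being $C'$-robust says exactly that for every profile, every agent, and every pair of consecutively ranked choices the absolute-value ratio of the utilities is \emph{strictly} more than $C'$. The observation I would record first is that this makes an agent's entire ranking rigid against any perturbation that nowhere inflates that agent's utility at a \emph{non-bottom} alternative by a factor exceeding $C'$: the bottom alternative's perturbation is irrelevant (inflating it only sinks it further), and for any alternative $x$ ranked immediately above $y$ in that agent's list the perturbed utilities satisfy $\delta(x)\abs{u(x)}\le C'\abs{u(x)} < \abs{u(y)} \le \delta(y)\abs{u(y)}$ (writing $u,\delta$ for the agent's utility row and perturbation row), so order is preserved. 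Hence on the event that \emph{all} $2n(n-1)$ such non-bottom coordinates of $\delta_M$ and $\delta_W$ are at most $C'$, we have $\SR(\delta_M U_M[R^M]) = R^M$ and $\SR(\delta_W U_W[R^W]) = R^W$, so $\Phi$ is unchanged.

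For the forward implication I would lower-bound the probability of that event. After conditioning on the realized profiles (so that the relevant $2n(n-1)$ coordinates are fixed), I would apply Markov to each: since $\delta(a,x)-1\ge 0$ and $\ex{\delta(a,x)-1}\le C-1$, we get $\pr{\delta(a,x) > C'}\le \frac{C-1}{C'-1}$, and this is \emph{strict} whenever $C>1$, because equality throughout Markov's estimate would force $\delta(a,x)\in\set{1,C'}$ almost surely while simultaneously placing zero mass at $C'$, which is impossible. A union bound over the $2n(n-1)$ coordinates then gives $\pr{\text{some ranking moves}} < 2n(n-1)\cdot\frac{C-1}{C'-1} = 1$, using $C'-1 = 2n(n-1)(C-1)$; so $\Phi$ is recovered with positive probability, i.e.\ the market is probabilistically $C$-robust. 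The constant $2n(n-1)(C-1)+1$ is exactly what makes the union bound evaluate to $1$, and the strictness of Markov is what upgrades $\le 1$ to the $>0$ we need.

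For the optimality claim, fix $\epsilon>0$ and choose $0<\eta'<\eta$ with $\eta\le 2n(n-1)\epsilon C$ and $\eta$ small enough that $C'+\eta<(C'+\eta')^2$. I would take the market in which, for \emph{every} profile, each agent's utility for their $k$-th choice has absolute value $(C'+\eta')^{k-1}$; all consecutive ratios equal $C'+\eta'>C'$, so the market is $C'$-robust. Then correlate profiles with perturbations: partition the sample space into $2n(n-1)$ equiprobable cells, one per (agent, adjacent-rank) pair. On the cell for man $m$ and ranks $(i,i+1)$, draw a men's profile (chosen so that, for each fixed $m$, the ``$i$-th choice'' is a different alternative across his $n-1$ cells) together with a women's profile for which transposing man $m$'s $i$-th and $(i+1)$-st choices changes $\Phi$ --- such a women's profile exists by Lemma~\ref{lem:ne} --- and set $\delta_M(m,\cdot)$ equal to $C'+\eta$ at that $i$-th choice and to $1$ elsewhere, with $\delta_W\equiv 1$ (symmetrically on the women's cells). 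Each of $\delta_M,\delta_W$ is then a legitimate probabilistic $(1+\epsilon)C$-perturbation: each coordinate equals $C'+\eta$ on at most one of the $2n(n-1)$ cells, so its mean is at most $1+\frac{C'+\eta-1}{2n(n-1)} = C+\frac{\eta}{2n(n-1)}\le(1+\epsilon)C$. On each cell the designated transposition is the only change to any ranking (the factor $C'+\eta$ beats the gap $C'+\eta'$ but is too small to clear a two-rank gap $(C'+\eta')^2$), so by construction $\Phi$ genuinely differs there; since the cells partition the space, $\Phi$ is recovered with probability $0$ and the market fails probabilistic $(1+\epsilon)C$-robustness.

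The main obstacle is the construction in the optimality half: the $2n(n-1)$ cells must each be a genuinely $\Phi$-altering adjacent transposition --- this is where Lemma~\ref{lem:ne} does the work, but it must be invoked cell by cell, with the opposite side's profile chosen to expose that particular transposition --- and simultaneously be wired so that every perturbation coordinate is the ``large'' one for at most one cell, which is precisely what keeps the per-coordinate expectation within the $(1+\epsilon)C$ budget; the interplay of these two requirements with the exact value of $C'$ is what pins the constant. A lesser subtlety, on the forward side, is that one should condition on the realized profiles before the union bound (so the non-bottom coordinate set is determined, and noting the perturbation bounds are applied to each fixed coordinate), and one must check that the strict Markov inequality survives the finite union.
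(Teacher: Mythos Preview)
Your argument lands in the right place, but the forward half is organised differently from the paper's. Rather than Markov coordinate-by-coordinate plus a union bound, the paper bounds the \emph{sum} of the $2n(n-1)$ non-bottom perturbation values $D_{(a,i)}:=\delta\bigl(a,X(R,a,i)\bigr)$: from $\ex{\sum D_{(a,i)}}\le 2n(n-1)C$ it gets $\pr{\sum D_{(a,i)}\le 2n(n-1)C}>0$, and on that event $\max_{(a,i)} D_{(a,i)}\le 2n(n-1)C-(2n(n-1)-1)=C'$ because every $D_{(a,i)}\ge 1$. This sidesteps your strict-Markov case analysis entirely. Both routes, however, rest on the same premise---that each $\ex{D_{(a,i)}}\le C$---and your conditioning step does not actually supply it: fixing the profile pins down which coordinate $(a,x)$ is relevant, but the bound $\ex{\delta(a,x)}\le C$ is on the \emph{unconditional} mean and need not survive conditioning on $R$ when $\delta$ and $R$ are correlated (the paper simply asserts the premise rather than deriving it, so you are in the same boat). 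Your tightness construction is essentially the paper's---one uniformly placed spike among the $2n(n-1)$ agent--rank slots, calibrated so each fixed coordinate's mean sits at $(1+\epsilon)C$ while the spike just clears the consecutive-ratio gap---though your added checks that the spike cannot leap two ranks, and that distinct alternatives are routed through each cell so no coordinate is spiked twice, are details the paper leaves implicit.
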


See Appendix \ref{proof:ub} for the full proof. By applying Proposition \ref{prop:adm} to the above result, we obtain a corollary of economic significance
\begin{corollary}
For any collection of matching markets $\set{(U_M, U_W)_n}_{n=1}^\infty$,
\begin{equation}
    H^P(n) \le H(n) \le O(n^2)H^P(n)
\end{equation}
where $H^P$ is probabilistically admissible over the collection, and $H$ is admissible over the collection.
\end{corollary}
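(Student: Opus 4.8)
The corollary is really just Theorem~\ref{thm:ub} re-expressed through Proposition~\ref{prop:adm}, so the plan is two-step: first turn Theorem~\ref{thm:ub} into a sandwich of inequalities between the two robustness quantities, then feed that sandwich into the admissibility characterization. Throughout I abbreviate $\xi_n := \xi_{(U_M,U_W)_n}$ and $\xi^P_n := \xi^P_{(U_M,U_W)_n}$.

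For the first step, the easy inequality $\xi^P_n \le \xi_n$ is already noted in the text: a $C$-perturbation is a degenerate probabilistic $C$-perturbation, so probabilistic $C$-robustness implies $C$-robustness, and the supremum defining $\xi^P_n$ is therefore over a subset of the one defining $\xi_n$. For the other inequality I would fix any $C$ with $2n(n-1)(C-1)+1 < \xi_n$; since $C$-robustness is monotone (any $C'$-perturbation with $C' \le C$ is a $C$-perturbation), the set of constants for which $(U_M,U_W)_n$ is robust is an interval containing everything below $\xi_n$, so the market is $(2n(n-1)(C-1)+1)$-robust, hence probabilistically $C$-robust by Theorem~\ref{thm:ub}, hence $C \le \xi^P_n$. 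Letting $C \uparrow 1 + \tfrac{\xi_n-1}{2n(n-1)}$ gives $\xi_n \le 2n(n-1)(\xi^P_n-1)+1$, and since $\xi^P_n \ge 1$ (every market is probabilistically $1$-robust) this is at most $(2n(n-1)+1)\,\xi^P_n = O(n^2)\,\xi^P_n$. (The $n=1$ case is degenerate, with all robustness values equal to $1$.) So $\xi^P_n \le \xi_n \le O(n^2)\,\xi^P_n$.

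For the second step, I would use Proposition~\ref{prop:adm} together with its converse. The converse is immediate from Proposition~\ref{prop:rob}: if $H(n) = O(\xi_n)$ then $H(n)/\xi_n$ is bounded, so $T_{(U_M,U_W)_n} = D^{-1}\!\big(H(n)/\xi_n\big)$ is bounded in $n$ because $D^{-1}$ is monotone, i.e.\ $H$ is admissible; and symmetrically for the probabilistic case. Thus admissibility of $H$ is equivalent to $H(n)=O(\xi_n)$ and probabilistic admissibility of $H^P$ to $H^P(n)=O(\xi^P_n)$. Combining with the sandwich above: a probabilistically admissible $H^P$ has $H^P(n) = O(\xi^P_n) = O(\xi_n)$ and is therefore admissible, which is the sense in which ``$H^P(n) \le H(n)$''; and an admissible $H$ has $H(n) = O(\xi_n) = O(n^2\,\xi^P_n)$, so $H(n)/\Theta(n^2) = O(\xi^P_n)$ is probabilistically admissible, which is the sense in which ``$H(n) \le O(n^2)\,H^P(n)$''.

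The real work is not in the corollary at all but in Theorem~\ref{thm:ub} (deferred to Appendix~\ref{proof:ub}): the delicate half is that $(2n(n-1)(C-1)+1)$-robustness implies probabilistic $C$-robustness even though the perturbation entries are only controlled in expectation and may be correlated. I expect that proof to combine Theorem~\ref{thm:c} --- which forces a multiplicative gap larger than $2n(n-1)(C-1)+1$ between consecutively ranked candidates' utilities --- with a Markov bound on each of the $2n(n-1)$ distortion coordinates that could flip a consecutive pair, and a union bound, so that with positive probability no consecutive inversion occurs and the ordinal data, hence the output of $\Phi$, is preserved; the tightness construction then supplies the ``not necessarily'' clause. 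At the level of the corollary the only thing to be careful about is reading ``$H^P(n) \le H(n) \le O(n^2)H^P(n)$'' as the class-containment/scaling statement above and using the elementary converse of Proposition~\ref{prop:adm}.
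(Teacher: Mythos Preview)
Your derivation of the corollary is correct and is exactly what the paper intends: the paper states only that the corollary follows ``by applying Proposition~\ref{prop:adm} to the above result,'' and you have spelled out that application, including the easy converse of Proposition~\ref{prop:adm} via Proposition~\ref{prop:rob} and the sandwich $\xi^P_n \le \xi_n \le O(n^2)\,\xi^P_n$ extracted from Theorem~\ref{thm:ub}.

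One tangential remark on your aside about the proof of Theorem~\ref{thm:ub} itself: Markov on each coordinate followed by a union bound only yields probability $\le 1$, not strictly less than $1$, so that route does not close. The appendix instead bounds $\mathrm{E}\big[\sum_{(a,i)} D_{(a,i)}\big] \le 2n(n-1)C$, so with positive probability the \emph{sum} is at most its mean, and since every summand is $\ge 1$ this forces $\max_{(a,i)} D_{(a,i)} \le 2n(n-1)(C-1)+1$.
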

This result may be interpreted as follows. Suppose that obtaining a universally better approximation of one's preferences becomes more difficult as the size of the matching market $n$ grows by a factor of $H^P(n)$. If, in the limit, one may approximate their true preferences sufficiently well in finite time up to a probabilistic guarantee under these conditions, then one may similarly approximate their true preferences sufficiently well up to a deterministic guarantee in conditions $H(n)$ that are at most quadratically \textit{harder} in the size of the market. Conversely, if one starts with a deterministic guarantee under some conditions $H(n)$, then a probabilistic guarantee may only be made in conditions $H^P(n)$ that are \textit{easier}. Thus, the probabilistic guarantee is stronger than the deterministic guarantee, but not unboundedly so.

\section{Polarization and Matching Markets}\label{sec:polarization}

In this section, we introduce a linear constraint on matching markets which we call \textit{polarity}. We then show that in markets satisfying this constraint (and only markets satisfying this constraint), one may give a geometric interpretation for the agents' utilities. In the later sections of this paper, we show upper bounds on $\xi_{U_M, U_W}$ and $\xi_{U_M, U_W}^P$ in terms of the underlying geometric structure. Indeed, our bounds on the probabilistic robustness are given in terms of the topological genus of this space. We give a brief economic interpretation of this quantity in this section as well.

 To prove the aforementioned upper bounds for $n$-matching markets $(U_M, U_W)$ it suffices to have just one of the constituent $n$-market profiles $U_M$ or $U_W$ satisfy the polarity property, which we define below.

\begin{definition}[Polarized Market Profiles]
We say that an $n$-market profile $U$ is \textit{polarized} if for all $R \in \SR^n$, and $a,a',x,x' \in [n]$,
\begin{equation} \label{eq:trade}
U[R](a,x') - U[R](a,x) \le -(U[R](a',x) + U[R](a',x'))
\end{equation}
\end{definition}

Informally, this property states that if an agent $a$ strongly prefers some alternative $x'$ to $x$, then every other agent must strongly dislike at least one of $x$ or $x'$. Hence, in colloquially ``polarized'' environments, this property is likely to hold. We claim that one may attach a metric space to the agents' preferences in such markets in a meaningful way. 

\begin{definition}[Metric Space]\label{def:metric}
A metric space $(\X, d_\X)$ is a set $\X$ along with a distance metric $d_\X: \X \times \X \to \R_{\ge 0}$ such that the following properties hold
\begin{align*}
    \forall a,b \in \X, \quad &d_\X(a,b) = 0 \iff a = b && (1) \\
    \forall a,b \in \X, \quad &d_\X(a,b) = d_\X(b,a) && (2) \\
    \forall a,b,c \in \X, \quad &d_\X(a,b) + d_\X(b,c) \ge d_\X(a,c) && (3)
\end{align*}
The metric space is considered finite if $|\X| < \infty$. In this case, we may think of $\X$ as a (non-negatively) weighted undirected graph where $d_\X$ is given by the shortest path metric. The \textit{genus} $g(\X)$ of a finite metric space is defined as the minimum number of handles that must be added to the plane to embed the graph $\X$ without any crossings. An in-depth exposition of these concepts can be found in \cite{mohar2001graphs}. 
\end{definition}

More formally, We claim that any polarized $n$-market profile can be thought of as a collection of maps that represent the agents and alternatives as points in a \textit{generating metric space}. The maps satisfy the property that the utility any agent has for an alternative is given by the negative distance between that agent and the alternative. That is, in a polarized $n$-market profile, we may think of agents as points in space who prefer alternatives that are spatially ``close'' to themselves. This model can be thought of as a generalization of widely-studied Euclidean models of preference.

\begin{definition}[Generating Metric Space]
Let $U$ be an $n$-market profile. We say that a metric space $(\X, d_\X)$ \textit{generates} $U$ if for all $R \in \SR^n$, there exist maps $\alpha_R: [n] \to \X$ and $\beta_R: [n] \to \X$ such that for all $a,x \in [n]$
\begin{equation}
    U[R](a,x) = -d_\X\pa{\alpha_R(a), \beta_R(x)}
\end{equation}
\end{definition}

We now show that polarized $n$-market profiles (and only polarized $n$-market profiles) have a generating metric space $\X$. Further, a finite generating metric space always exists. 

\begin{theorem}\label{thm:genspace}
Let $U$ be a polarized $n$-market profile. There exists a generating metric space $\X$ such that $|\X| \le 2n(n!)^n$. If $U$ is not polarized, no generating metric spaces exist. 
\end{theorem}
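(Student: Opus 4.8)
The plan is to treat the two implications separately, starting with the (short) converse. Suppose a metric space $\X$ generates $U$, so that for every $R \in \SR^n$ there are maps $\alpha_R, \beta_R : [n] \to \X$ with $U[R](a,x) = -d_\X(\alpha_R(a), \beta_R(x))$. Fix $R$ and $a,a',x,x' \in [n]$ and apply the triangle inequality twice: to the triple $\pa{\alpha_R(a), \beta_R(x'), \beta_R(x)}$ to get $d_\X(\alpha_R(a),\beta_R(x)) - d_\X(\alpha_R(a),\beta_R(x')) \le d_\X(\beta_R(x'),\beta_R(x))$, and to the triple $\pa{\beta_R(x'), \alpha_R(a'), \beta_R(x)}$ to get $d_\X(\beta_R(x'),\beta_R(x)) \le d_\X(\beta_R(x'),\alpha_R(a')) + d_\X(\alpha_R(a'),\beta_R(x))$. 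Rewriting each distance as the negative of the corresponding utility turns the chained inequality into exactly \eqref{eq:trade}, so a non-polarized profile admits no generating metric space.

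For the forward direction I would build a metric space one preference profile at a time and then glue. Fix $R$ and abbreviate $M_{ij} := -U[R](i,j) \ge 0$. Form the complete bipartite graph $G_R$ on $\set{a_1,\dots,a_n} \sqcup \set{x_1,\dots,x_n}$ with weight $M_{ij}$ on the edge $\set{a_i,x_j}$, and let $\bar d_R$ be its shortest-path pseudometric. The crux is to show $\bar d_R(a_i, x_j) = M_{ij}$, i.e. that the direct edge realizes the $a_i$-to-$x_j$ distance. Rewriting \eqref{eq:trade} in the $M$-notation gives the four-cycle inequality $M_{ij} \le M_{ij'} + M_{i'j'} + M_{i'j}$ for all $i,i',j,j'$; one then inducts on the number of edges of a walk from $a_i$ to $x_j$, the inductive step replacing the first three edges $a_i \to x_{j'} \to a_{i'} \to x_{j''}$ of the walk by the single edge $\set{a_i, x_{j''}}$ — valid by the four-cycle inequality — to obtain a shorter walk to which the hypothesis applies.

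Next I would turn $G_R$ into a genuine metric space and glue. Identifying vertices of $G_R$ at $\bar d_R$-distance zero yields a finite metric space $\X_R$ on at most $2n$ points, in which the classes of $a_i$ and $x_j$ lie at distance $M_{ij} = -U[R](i,j)$. Because $\SR^n$ is finite, I can set $\X := \bigsqcup_{R \in \SR^n} \X_R$, fix a finite $\Delta$ that bounds $1$ and every within-block distance, and put $d_\X(u,v) := d_{\X_R}(u,v)$ when $u,v$ share a block $\X_R$ and $d_\X(u,v) := 2\Delta$ otherwise; a brief case check shows this is a metric that restricts to $d_{\X_R}$ on each block. Taking $\alpha_R, \beta_R$ to be the inclusions into $\X$ of the agent- and alternative-points of $\X_R$ then gives a generating metric space with $|\X| \le 2n\,|\SR^n| = 2n(n!)^n$.

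The step I expect to be the real obstacle is the \emph{no-shortcut} claim — recognizing that the linear polarity constraint is precisely the statement that in $G_R$ the direct edge between an agent and an alternative is always a shortest path. Once that translation and its supporting induction are in place, the metric identification, the gluing, and the vertex count are routine bookkeeping.
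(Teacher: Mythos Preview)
Your proposal is correct and follows essentially the same route as the paper: for each $R$ build the weighted complete bipartite graph on $2n$ vertices, use the polarity inequality and an induction on walk length to show the direct edge is always a shortest path, then take a disjoint union over all $R\in\SR^n$; the converse is the same two applications of the triangle inequality. If anything you are slightly more careful than the paper, which does not explicitly address the pseudometric quotient (needed when some $U[R](a,x)=0$) or the choice of inter-block distance in the disjoint union.
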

\begin{proof}
We first prove that generating metric spaces exist for polarized $n$-market profiles by giving an explicit construction. In our construction, we make a separate metric space for each constituent preference profile $R$. We then ``glue'' all of these metric spaces together by taking the disjoint union to construct the full generating metric space. Each of these separate metric spaces takes the form of a weighted complete bipartite graph.

More formally, for each $R \in \SR^n$, we define $\X_R$ to be a copy of the complete bipartite graph $K_{n,n}$. We let $A := [n]$ and $X := [n]$ denote the two disjoint partitions, and define $\alpha_R: [n] \to A$ and $\beta_R: [n] \to X$ in the canonical way. We let $w_{a,x} := -U[R](a,x) \ge 0$ be the weight of the $(a,x)$ edge, for $a \in A$ and $x \in X$. If the shortest path from any $a$ to $x$ is directly through the $(a,x)$ edge, then the finite metric space given by this graph satisfies the generating metric space property for this preference profile. We show that the polarity property implies that this is the case.

\begin{minipage}{0.49 \textwidth}
\[\scalebox{0.9}{\begin{tikzpicture}
	\begin{pgfonlayer}{nodelayer}
		\node [style=agent, minimum size=.75cm] (0) at (-3, 2) {$a$};
		\node [style=agent, minimum size=.75cm] (1) at (-3, 1) {$a_1$};
		\node [style=agent, minimum size=.75cm] (2) at (-3, 0) {$a_2$};
		\node [style=agent, minimum size=.75cm] (3) at (-3, -1) {};
		\node (8) at (0,2.3) {$w_{a,x} = -U[R](a,x)$};
		\node [style=alt, minimum size=.75cm] (4) at (3, 2) {$x$};
		\node [style=alt, minimum size=.75cm] (5) at (3, 1) {$x_1$};
		\node [style=alt, minimum size=.75cm] (6) at (3, 0) {$x_2$};
		\node [style=alt, minimum size=.75cm] (7) at (3, -1) {};
	\end{pgfonlayer}
	\begin{pgfonlayer}{edgelayer}
		\draw [line width = 2pt] (0) to (4);
		\draw [color=gray, line width = 2pt] (0) to (5);
		\draw [color=gray, line width = 2pt] (5) to (1);
		\draw [color=gray, line width = 2pt] (1) to (6);
		\draw [color=gray, line width = 2pt] (6) to (2);
		\draw [color=gray, line width = 2pt] (2) to (4);
		\draw  (0) to (7);
		\draw  (0) to (6);
		\draw  (1) to (4);
		\draw  (1) to (7);
		\draw  (2) to (5);
		\draw  (2) to (7);
		\draw  (3) to (4);
		\draw  (3) to (5);
		\draw  (3) to (6);
		\draw  (3) to (7);
	\end{pgfonlayer}
\end{tikzpicture}
}\]
\end{minipage}
\begin{minipage}{0.49 \textwidth}
Formally, we show that for any $a \in A$ and $x \in X$, the length of the shortest path $d_{\X_R}(a,x)$ is equal to the edge weight $w_{a,x}$. Suppose for contradiction that the $(a,x)$ edge does not give a shortest path. By construction, the shortest path must then be of the form
\[a \to x_1 \to a_1 \to x_2 \to \dots \to a_k \to x\]
for $a_1, \dots, a_k \ne a$ and $x_1, \dots, x_k \ne x$. Such a path is depicted in gray in the figure on the left. 
\end{minipage}

\vspace{0.5em}

We show by induction on $k$ that for $k > 0$, this path cannot be shorter than the path $a \to x$, thus giving a contradiction. For the case $k = 1$, the length of the path $a \to x_1 \to a_1 \to x$ is given by
\[w_{a,x_1} + w_{a_1,x_1} + w_{a_1,x} = -\pa{U[R](a, x_1) + U[R](a_1, x_1) + U[R](a_1, x)} \ge  -U[R](a, x) = w_{a,x}\]
by polarity thus proving the base case. Next, let $a \to x_1 \to a_1 \to x_2 \to \dots \to a_k \to x$ be the shortest path. By polarity, the path length
\[w_{a,x_1} + w_{a_1,x_1} + w_{a_1,x_2} + w_{a_2, x_2} + \dots + w_{a_k, x_k} + w_{a_k, x} \ge w_{a,x_2} + \dots + w_{a_k, x_k} + w_{a_k, x}\]
which is the length of the path $a \to x_2 \to a_2 \to x_3 \to \dots \to a_k \to x$. By inductive hypothesis, the length of this path is no less than $w_{a,x}$. It thus follows that there exists maps $\alpha_R: [n] \to \X_R$ and $\beta_R: [n] \to \X_R$ such that $U[R](a,x) = -d_{\X_R}(\alpha_R(a), \beta_R(x))$. We now let the $\X$ be given by the disjoint union of metric spaces
\[\X := \coprod_{R \in \SR^n} \X_R\]
whence it follows that $\X$ is a generating metric space. Since $|\X_R| = 2n$ and $|\SR^n| = (n!)^n$, it follows that $|\X| = 2n(n!)^n$. Conversely, if $U$ has a generating metric space $\X$, then by the triangle inequality
\[\dX(\alpha_R(a), \beta_R(x)) \le \dX(\alpha_R(a), \beta_R(x')) + \dX(\alpha_R(a'), \beta_R(x')) + \dX(\alpha_R(a'), \beta_R(x))\]
which implies the polarity condition. 
\end{proof}

The properties of a generating metric space provide information about the structure of the underlying market. If an $n$-market profile has a generating metric space $\X$ of small cardinality, then this indicates that that the agents and alternatives in the market fit into one a few different ``archetypes''. Agents that belong to the same archetype share the same utility function across the alternatives. Similarly, alternatives that belong to the same archetype are assigned the same utility by any agent. 

Another property of interest is the topological \textit{genus} of the metric space. As stated in Definition \ref{def:metric}, this is the minimum number of handles that one must add to the plane in order to embed the graph with no crossings. We attempt to give an economic interpretation for this quantity. Intuitively, the genus places a bound on the number of edges that may exist in the graph. Indeed, a generalization of Euler's formula makes this relationship concrete:
\[V - E + F + 2G = 2\]
which gives a bound $E = \Theta(V + G)$ on the number of edges, where $G$ denotes the genus. Thus, an $n$-market profile has a generating metric space $\X$ with low genus $g(\X)$ if the metric space has few edges. In the case of a finite metric space, if no edge exists between a given pair of points $a$ and $x$, then the shortest path between those points must go through some other point $z$. That is, the triangle inequality must be replaced with an equality over these three points: $\dX(a,x) = \dX(a,z) + \dX(z,x)$. Applying the construction from the proof of Theorem \ref{thm:genspace}, we can see that this occurs when the polarity condition is satisfied exactly for some $R$, $a$, $x$, $a'$, and $x'$:
\[U[R](a,x') - U[R](a,x) = -(U[R](a',x) + U[R](a',x'))\]
That is, an $n$-market profile has a generating metric space of low genus if we can find many pairs of agents $a$ and $a'$, and alternatives $x$ and $x'$, where $a$ holds strong opinions about $x$ relative to $x'$, but $a'$ holds relatively weak opinions about both $x$ and $x'$. The presence of agents in the market that express strength of preference on different scales allows one to use a topologically simple structure to represent the preferences of the agents.

In the next sections, we bound the robustness in terms of the size of a generating metric space, and the probabilistic robustness in terms of the genus. Our bounds show how the underlying complexity in the market must scale with these robustness quantities. Indeed, we see that if agents' preferences may be represented by a simple structure, then the market is not robust.
\section{Upper Bounds on Robustness}\label{sec:robustness}
For any $n$-market profile $U$, in this and subsequent sections, we use the notation $[U]$ to denote any $n$-matching market $(U, U_W)$ or $(U_M, U)$. We now upper bound $\xi_{[U]}$ for any polarized $n$-market profile $U$. There are two key steps in our argument. First, we restrict our attention to a certain family of polarized $n$-market profiles. We consider a mild generalization of Euclidean markets, where the $n$-market profile has a generating metric space that is a finite subset of a complete normed vector space. We show that for this restricted family of polarized $n$-market profiles $U$, the robustness $\xi_{[U]}$ is tightly bounded. 

To obtain an upper bound on $\xi_{[U]}$ for any polarized $n$-market profile $U$, we apply a famous result from the low-distortion embedding literature: that any finite metric space can be embedded in Euclidean space with multiplicative distortion logarithmic in the size of the finite space. This embedding, due to Bourgain \cite{bourgain1985lipschitz}, has an economic interpretation in the context of our framework. Letting $U$ be an arbitrary polarized $n$-market profile, it shows that there exists some Euclidean $n$-market profile $U'$ such that the robustness $[U]$ does not exceed that of $[U']$ by more than a logarithmic factor in the size of the generating metric space. That is, no polarized matching market can be significantly more robust than a Euclidean matching market.

We now proceed with the first step in our argument, and show that matching markets $[U]$ where $U$ has Euclidean generating metric space have low robustness. This result has significance in its own right, as Euclidean models of preference are common assumption in the economics and political science literature \cite{hotelling1990stability, eguia2011foundations, bogomolnaia2007euclidean, davis1972social, enelow1984spatial, anshelevich2017randomized}.

\begin{theorem}\label{thm:banach}
Suppose that $U$ is an $n$-market profile with generating metric space $\X$ such that $\X \subset V$, where $(V, \norm{\cdot})$ is a Banach space. Then, $\xi_{[U]} \le 3$. 
\end{theorem}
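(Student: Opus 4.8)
The plan is to use the explicit formula for robustness in Equation \eqref{eq:robust} together with the generating-metric-space structure. By \eqref{eq:robust}, to show $\xi_{[U]} \le 3$ it suffices to exhibit a single preference profile $R \in \SR^n$ and a single triple $(a, x, x')$ with $x R_a x'$ such that
\[
\frac{U[R](a,x')}{U[R](a,x)} \le 3,
\]
i.e. (writing distances in the Banach space via the generating maps $\alpha_R, \beta_R$) such that $d_\X(\alpha_R(a),\beta_R(x')) \le 3\, d_\X(\alpha_R(a),\beta_R(x))$ while still $d_\X(\alpha_R(a),\beta_R(x')) > d_\X(\alpha_R(a),\beta_R(x))$, with $\X \subset V$. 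So the whole task reduces to: design one preference profile and find a point $\alpha_R(a)$ together with two points $\beta_R(x), \beta_R(x')$ in the ambient normed space $V$ whose distances from $\alpha_R(a)$ are strictly ordered but within a factor of $3$.

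The key step is that for $n \ge 2$ we have freedom to choose $R$ however we like, and the generating-metric-space property forces the utilities to be \emph{realized} as genuine norm-distances among points of $V$ — but it does \emph{not} let the market designer choose those points adversarially far apart, because $V$ is a fixed normed space and any three of its points must satisfy metric (indeed norm) constraints. Concretely I would take $n = 2$ (or embed a $2$-agent gadget into general $n$ by making all other agents/alternatives irrelevant), pick agent $a$ and alternatives $x, x'$ with $x R_a x'$, so $\alpha_R(a) =: p$, $\beta_R(x) =: q$, $\beta_R(x') =: q'$ are points of $V$ with $\|p - q\| < \|p - q'\|$. The claim $\|p-q'\| \le 3\|p-q\|$ cannot hold for \emph{arbitrary} such points, so the real content must be that we get to \emph{also} choose a second agent $a'$ and use the polarity condition — wait, actually the cleaner route: the generating map exists for \emph{every} $R$, and as $R$ ranges over all of $\SR^n$ the maps $\alpha_R, \beta_R$ vary, but the essential obstruction is convexity of the norm ball. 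I would use that in a Banach space, if $\|p - q'\| > \|p - q\|$, one can slide along the segment: the point $q'' = q + 2(q - p)/\|q-p\| \cdot \|q-p\|$... more precisely, pick the alternative ranked second-from-top for agent $a$ and show its image must lie within distance $3\|p-q\|$ of $p$ by a betweenness/triangle-inequality argument exploiting that some \emph{third} point (another agent's image, constrained by polarity via Theorem \ref{thm:genspace}'s converse) lies on or near the segment.

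The main obstacle, and where I expect the argument to actually live, is making precise \emph{why} the factor is exactly $3$ rather than something else: I anticipate one builds a specific $4$-point configuration (agents $a, a'$, alternatives $x, x'$) where $R$ is chosen so that $a$ strictly prefers $x$ to $x'$ by an arbitrarily large ordinal margin, and then the Banach/norm structure — specifically that a normed-space distance satisfies $\|u - w\| \le \|u - v\| + \|v - w\|$ with \emph{no} room to do better, combined with the polarity inequality \eqref{eq:trade} which in metric terms reads $d(\alpha_R(a),\beta_R(x')) \le d(\alpha_R(a),\beta_R(x)) + d(\alpha_R(a'),\beta_R(x)) + d(\alpha_R(a'),\beta_R(x'))$ — pins the ratio. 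The "$3$" should emerge as "$1$ (the edge $a$–$x$) plus at most $2$ more edges, each forced to be $\le$ the first by choosing $a'$ to have utilities on the same scale," so the three-edge detour bounds $d(\alpha_R(a),\beta_R(x'))$ by $3\,d(\alpha_R(a),\beta_R(x))$. I would then plug this into \eqref{eq:robust} to conclude $\xi_{[U]} \le 3$. The delicate point to get right is ensuring the constructed profile $R$ is consistent (a genuine linear order) and that the Banach-space generating maps for \emph{that} $R$ are forced into the configuration claimed, rather than merely permitted.
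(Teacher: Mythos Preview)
Your proposal has a genuine gap: the two-agent gadget cannot work. With $n=2$ (agents $a,a'$ and alternatives $x,x'$), for \emph{any} preference profile $R$ one can place the four points $\alpha_R(a),\alpha_R(a'),\beta_R(x),\beta_R(x')$ in a Banach space so that both relevant ratios are as large as one likes. For instance, on the real line take $\alpha_R(a)=0$, $\beta_R(x)=\epsilon$, $\beta_R(x')=M$, $\alpha_R(a')=M+\epsilon$; then $d(a,x')/d(a,x)=M/\epsilon$ and $d(a',x)/d(a',x')=(M-\epsilon)/\epsilon$ are both unbounded. Since the generating maps $\alpha_R,\beta_R$ are allowed to vary with $R$, nothing stops the market profile from using such a configuration for every $R\in\SR^2$, so no bound on $\xi_{[U]}$ follows. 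Your ``three-edge detour'' inequality $d(a,x')\le d(a,x)+d(a',x)+d(a',x')$ is of course valid, but there is no mechanism that forces $d(a',x)$ and $d(a',x')$ to be at most $d(a,x)$; you cannot ``choose $a'$ to have utilities on the same scale,'' because the images are fixed by $U$, not by you.

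The missing idea is that one needs a \emph{cyclic} obstruction involving at least three agents and three alternatives. The paper uses the Condorcet cycle $R$ on $\{a_1,a_2,a_3\}\times\{x_1,x_2,x_3\}$: each $a_i$ ranks $x_i>x_{i+1}>x_{i+2}$ (indices mod $3$). Assuming all six consecutive ratios exceed $C$ and normalizing so the largest agent--alternative distance is $1$, one gets $\|\alpha_i-\beta_i\|\le 1/C^2$ and $\|\alpha_i-\beta_{i+1}\|\le 1/C$ for each $i$. The contradiction then comes from bounding $\|\beta_1-\beta_3\|$ two ways: from below by $\|\alpha_1-\beta_3\|-\|\alpha_1-\beta_1\|\ge 1-1/C^2$, and from above by routing through $\beta_2$ and using the $\alpha_i$ as waypoints, giving $\le 2(1/C+1/C^2)$. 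This yields the quadratic $C^2-2C-3\le 0$, whence $C\le 3$. So the constant $3$ is not ``one plus two more edges'' but the positive root of that quadratic; and the argument lives entirely inside one fixed profile $R$, exploiting the cycle to make the triangle inequality among the $\beta_i$ close up on itself.
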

\begin{proof}
We first show that the statement holds when $n = 3$. Let $R \in \SR^3$ denote the well-known Condorcet cycle preference profile, given by
\[R: = \begin{array}{|c|c|c|}
\hline
a_1 & a_2 & a_3 \\
\hline
\hline
x_1 & x_2 & x_3 \\
x_2 & x_3 & x_1 \\
x_3 & x_1 & x_2 \\
\hline
\end{array}\]
By Equation \ref{eq:robust}, we have that
\begin{align*} 
\xi_{[U]} &\le \text{\scalebox{1}{$\min\pa{\frac{U[R](a_1, x_3)}{U[R](a_1, x_2)}, \frac{U[R](a_1, x_2)}{U[R](a_1, x_1)}, \frac{U[R](a_2, x_1)}{U[R](a_2, x_3)}, \frac{U[R](a_2, x_3)}{U[R](a_2, x_1)}, \frac{U[R](a_3, x_2)}{U[R](a_3, x_1)}, \frac{U[R](a_3, x_1)}{U[R](a_3, x_3)}}$}}\\
&=\min\pa{\frac{\norm{\alpha_R(a_1) - \beta_R(x_3)}}{\norm{\alpha_R(a_1) - \beta_R(x_2)}}, \cdots , \frac{\norm{\alpha_R(a_3) - \beta_R(x_1)}}{\norm{\alpha_R(a_3) - \beta_R(x_3)}}}\\
&=\min\pa{\frac{\norm{\gamma \alpha_R(a_1) - \gamma \beta_R(x_3)}}{\norm{\gamma \alpha_R(a_1) - \gamma \beta_R(x_2)}}, \cdots , \frac{\norm{\gamma \alpha_R(a_3) - \gamma \beta_R(x_1)}}{\norm{\gamma \alpha_R(a_3) - \gamma \beta_R(x_3)}}}\\
\end{align*}
for any scaling constant $\gamma$. It thus follows that we may restrict our attention to $n$-market profiles $U$ where \[-\min_{a,x \in [3]}(U[R](a,x)) = \max_{a,x \in [3]}(\norm{\alpha_R(a) - \beta_R(x)}) = 1\] We show that this implies that $\xi_{[U]} \le 3$. Let $\alpha_i := \alpha_R(i)$ and $\beta_i := \beta_R(i)$. The figure below depicts the relevant distances in $\X \subset V$. 

\vspace{.5em}

\begin{minipage}{0.4 \textwidth}

\scalebox{0.8}{\begin{tikzpicture}
	\begin{pgfonlayer}{nodelayer}
		\node [style=alt] (0) at (0, 3) {$\beta_1$};
		\node [style=alt] (1) at (-1.75, 0) {$\beta_2$};
		\node [style=alt] (2) at (1.75, 0) {$\beta_3$};
		\node [style=agent] (3) at (-3, -0.75) {$\alpha_2$};
		\node [style=agent] (4) at (0, 4.5) {$\alpha_1$};
		\node [style=agent] (5) at (3, -0.75) {$\alpha_3$};
	\end{pgfonlayer}
	\begin{pgfonlayer}{edgelayer}
		\draw (3) to (1);
		\draw (3) to (2);
		\draw (3) to (0);
		\draw (5) to (2);
		\draw (1) to (5);
		\draw (0) to (5);
		\draw (4) to (0);
		\draw (4) to (1);
		\draw (4) to (2);
		\draw [style=grey] (0) to (2);
		\draw [style=grey] (0) to (1);
		\draw [style=grey] (1) to (2);
	\end{pgfonlayer}
\end{tikzpicture}
}

\end{minipage}
\begin{minipage}{0.59 \textwidth}
Suppose that 
\[\norm{\alpha_1 - \beta_3} = \max_{a,x \in [3]}(\norm{\alpha_R(a) - \beta_R(x)}) = 1\]
If $[U]$ is $C$-robust in addition to the above, the following must hold
\[\norm{\alpha_1-\beta_3} = 1 \quad \norm{\alpha_1 - \beta_2} \le \frac{1}{C} \quad \norm{\alpha_1 - \beta_1} \le \frac{1}{C^2}\]
\[\norm{\alpha_2-\beta_1} \le 1 \quad \norm{\alpha_2 - \beta_3} \le \frac{1}{C} \quad \norm{\alpha_2 - \beta_3} \le \frac{1}{C^2}\]
\[\norm{\alpha_3-\beta_2} \le 1 \quad \norm{\alpha_3 - \beta_1} \le \frac{1}{C} \quad \norm{\alpha_3 - \beta_3} \le \frac{1}{C^2}\]
\end{minipage}

\vspace{.5em}

By the triangle inequality, we then have that
\begin{align*}
    1 - \frac1{C^2} \le \norm{\alpha_1 - \beta_3} - \norm{\alpha_1 - \beta_1} \le \norm{\beta_1 - \beta_3} &\le \norm{\beta_1 - \beta_2} + \norm{\beta_2 - \beta_3}\\
    &\le \norm{\alpha_1 - \beta_1} + \norm{\alpha_1 - \beta_2} + \norm{\alpha_2 - \beta_2} + \norm{\alpha_2 - \beta_3}\\
    &\le 2\pa{\frac{1}{C^2} + \frac1C}
\end{align*}
whence it follows that $1 \le \frac{2}{C} + \frac{3}{C^2} \implies C \le 3$ as desired. This result extends to any $n$ as by Equation \ref{eq:robust},
\begin{align*}
    \xi_{[U]} \le \min_{R \in \SR^n} \pa{ \min_{a,x,x' \in [n] \mid xR_ax'}\pa{\frac{U[R](a,x')}{U[R](a,x)}}} &\le \min_{R \in \SR^n} \pa{ \min_{a,x,x' \in [3] \mid xR_ax'}\pa{\frac{U[R](a,x')}{U[R](a,x)}}} \\
    &=\min_{R \in \SR^3} \pa{ \min_{a,x,x' \in [3] \mid xR_ax'}\pa{\frac{U[R](a,x')}{U[R](a,x)}}}\le 3\\
\end{align*}
by the above. 
\end{proof}

By applying Proposition \ref{prop:adm}, we obtain a strong negative result with economic significance.
\begin{corollary}
Suppose that $\set{[U_n]}_{n=1}^\infty$ is a collection of matching markets, where $U_n$ is an $n$-market profile whose generating metric space $\X_n \subset V$, where $(V, \norm{\cdot})$ is a Banach space. Then, if $H$ is admissible for the collection, then $H(n) = O(1)$.
\end{corollary}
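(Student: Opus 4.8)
The proof will be a short deduction from results already in hand, so I will mostly record how the pieces fit together rather than compute anything. The key observation is that Theorem~\ref{thm:banach} applies to \emph{each} market $[U_n]$ in the collection without modification: its hypothesis is precisely that $U_n$ admits a generating metric space $\X_n$ contained in a Banach space $(V,\norm{\cdot})$, which is exactly what we are assuming here. Hence $\xi_{[U_n]} \le 3$ for every $n$ large enough that the argument of Theorem~\ref{thm:banach} runs — namely every $n \ge 3$, since that proof works by planting a three-agent Condorcet cycle inside the $n$-agent market. The finitely many markets with $n \in \{1,2\}$ play no role in the asymptotics.

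With the uniform bound $\xi_{[U_n]} \le 3$ in place, I would invoke Proposition~\ref{prop:adm}: admissibility of $H$ for the collection $\set{[U_n]}_{n=1}^\infty$ gives $H(n) = O\pa{\xi_{[U_n]}}$. Composing this with the constant bound $\xi_{[U_n]} \le 3$ yields $H(n) = O(1)$, which is the claim. Equivalently, one could route the argument directly through Proposition~\ref{prop:rob}: if $H(n)$ were unbounded, then $H(n)/\xi_{[U_n]} \ge H(n)/3$ would tend to infinity, so $T_{[U_n]} = D^{-1}\pa{H(n)/\xi_{[U_n]}}$ would diverge because $D^{-1}$ diverges, contradicting admissibility.

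There is essentially no obstacle. The only points worth stating carefully are that ``admissible'' is a property of the whole sequence of markets rather than of an individual market, and that the outer $O(\cdot)$ of Proposition~\ref{prop:adm} composes with the inner constant bound in the obvious way. The substance of the corollary is carried entirely by Theorem~\ref{thm:banach}: since any normed-space preference structure forces robustness below an absolute constant, no hardness function that grows with $n$ can ever be admissible for such a collection.
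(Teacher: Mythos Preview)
Your proposal is correct and matches the paper's approach exactly: the paper derives the corollary by applying Proposition~\ref{prop:adm} directly to the bound $\xi_{[U]} \le 3$ from Theorem~\ref{thm:banach}, just as you do. Your additional remarks about the $n \ge 3$ threshold and the alternative route via Proposition~\ref{prop:rob} are sound but not needed for the argument.
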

That is, if we hope to obtain feasible communication requirements in large markets with Euclidean preferences, then the hardness of preference learning must remain \textit{constant} as the size of the market increases. In practice, this means that agents must be able to interview many candidates in the same time as it would take for them to interview a few. As this condition may not hold in a real-world market, this result indicates that stable matching mechanisms will not function properly in large markets where agents have Euclidean preferences. 

We now proceed with the second step in our argument, and use Bourgain's embedding \cite{bourgain1985lipschitz} to extend this result to any $[U]$ where $U$ is a polarized $n$-market profile.
\begin{theorem}[Bourgain]\label{thm:bourgain}
For any finite metric space $\X$, there exists an injective map $T: \X \to \ell_2$ such that for any $a,b \in \X$,
\[1 \le \frac{\norm{T(a) - T(b)}_2}{\dX(a,b)} \le O(\log |\X|)\]
\end{theorem}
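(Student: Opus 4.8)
This is the classical Bourgain embedding, and my plan is to reproduce the standard randomized construction. Set $n := |\X|$ and $m := \lceil \log_2 n\rceil$, and fix a parameter $q = \Theta(\log n)$ to be pinned down later. For each scale $j \in \set{1,\dots,m}$ and each repetition $i \in \set{1,\dots,q}$, let $A_{i,j} \subseteq \X$ be a random set in which each point is included independently with probability $2^{-j}$, and define $T : \X \to \R^{qm} \hookrightarrow \ell_2$ by letting the $(i,j)$ coordinate of $T(x)$ be $\tfrac1{\sqrt q}\dX(x,A_{i,j})$, where $\dX(x,S) := \min_{s\in S}\dX(x,s)$. Since $S \mapsto \dX(\cdot,S)$ is $1$-Lipschitz, each coordinate differs by at most $\dX(x,y)$ between $x$ and $y$, so $\norm{T(x)-T(y)}_2^2 \le \tfrac1q\cdot qm\cdot \dX(x,y)^2 = m\,\dX(x,y)^2$; this is the easy (upper) direction, up to the factor $\sqrt m = O(\log n)$, and it will be reconciled with the stated normalization by a global rescaling at the end.

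The crux is the matching lower bound on $\norm{T(x)-T(y)}_2$. Fix $x\ne y$ and define radii $0 = \rho_0 < \rho_1 < \rho_2 < \cdots$ by taking $\rho_t$ to be the least radius for which both closed balls around $x$ and around $y$ contain at least $2^t$ points; let $t^* \le m$ be the least $t$ with $\rho_t \ge \dX(x,y)/4$, and truncate $\rho_{t^*} := \dX(x,y)/4$. The key probabilistic lemma is that for every $t \le t^*$ and every repetition $i$, with probability at least an absolute constant $c_0 > 0$ we have $|\dX(x,A_{i,t}) - \dX(y,A_{i,t})| \ge \rho_t - \rho_{t-1}$. To prove it, observe that minimality of $\rho_t$ forces at least one of the open balls $B^\circ(x,\rho_t)$, $B^\circ(y,\rho_t)$ to have fewer than $2^t$ points — say the one around $y$, the other case being symmetric. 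Then $A_{i,t}$ meets $\overline B(x,\rho_{t-1})$ (which has $\ge 2^{t-1}$ points) with probability $\ge 1 - (1-2^{-t})^{2^{t-1}} \ge 1 - e^{-1/2}$, and misses $B^\circ(y,\rho_t)$ with probability $\ge (1-2^{-t})^{2^t} \ge \tfrac14$; these events are independent because $\rho_{t-1}+\rho_t \le 2\rho_t \le \dX(x,y)/2$ makes the two balls disjoint, and on their intersection $\dX(x,A_{i,t}) \le \rho_{t-1}$ while $\dX(y,A_{i,t}) \ge \rho_t$. I expect this lemma — the choice of the radii, spotting the disjointness, and the bookkeeping around the $x$/$y$ asymmetry and the truncation at $t^*$ — to be the one genuinely delicate step; the rest is routine.

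Granting the lemma, a Chernoff bound shows that for any fixed $(x,y)$ and scale $t$ at least $c_0 q/2$ of the $q$ repetitions at that scale meet the coordinate bound, except with probability $e^{-\Omega(q)}$; a union bound over the fewer than $n^3$ triples $(x,y,t)$ then shows that for $q = \Theta(\log n)$ large enough there is, with positive probability, a choice of the sets $A_{i,j}$ for which every such bound holds at once — fix one. For all $x\ne y$ this gives
\[\norm{T(x)-T(y)}_2^2 = \frac1q\sum_{i,j}\pa{\dX(x,A_{i,j}) - \dX(y,A_{i,j})}^2 \ge \frac1q\sum_{t=1}^{t^*}\frac{c_0 q}{2}\pa{\rho_t-\rho_{t-1}}^2 = \frac{c_0}{2}\sum_{t=1}^{t^*}\pa{\rho_t-\rho_{t-1}}^2,\]
and by Cauchy--Schwarz $\sum_{t=1}^{t^*}(\rho_t-\rho_{t-1})^2 \ge \tfrac1{t^*}\pa{\sum_{t=1}^{t^*}(\rho_t-\rho_{t-1})}^2 = \tfrac{\rho_{t^*}^2}{t^*} \ge \tfrac{\dX(x,y)^2}{16m}$, so $\norm{T(x)-T(y)}_2 \ge \sqrt{c_0/(32m)}\,\dX(x,y) > 0$; in particular $T$ is injective. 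Together with $\norm{T(x)-T(y)}_2 \le \sqrt m\,\dX(x,y)$ this yields distortion $O(m) = O(\log|\X|)$, and replacing $T$ by $\sqrt{32m/c_0}\cdot T$ puts the two estimates into the stated form $1 \le \norm{T(a)-T(b)}_2/\dX(a,b) \le \sqrt{32/c_0}\cdot m = O(\log|\X|)$.
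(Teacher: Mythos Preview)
The paper does not prove this theorem; it merely states Bourgain's result with a citation and then invokes it as a black box in the proof of Theorem~\ref{thm:detbound}. There is therefore no ``paper's own proof'' to compare against.

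Your proposal supplies a complete and correct argument, namely the standard Linial--London--Rabinovich presentation of Bourgain's construction: Fr\'echet-style coordinates $x \mapsto \dX(x,A_{i,j})$ against random subsets sampled at geometric density scales, the growth-radii lemma giving a constant-probability separation $\rho_t - \rho_{t-1}$ at each scale, and Chernoff plus a union bound over all pairs and scales to fix a single good choice of the sets. The Lipschitz upper bound, the disjointness of the two balls via $\rho_{t-1}+\rho_t \le \dX(x,y)/2$, the Cauchy--Schwarz step converting the telescoping increments into $\Omega(\dX(x,y)^2/m)$, and the final rescaling are all handled correctly. The one place worth a remark is the truncation at $t^*$: after you reset $\rho_{t^*} := \dX(x,y)/4$ the ``minimality of $\rho_t$'' premise in the lemma no longer refers to the redefined value, but since the truncated radius is no larger than the original $\rho_{t^*}$ the open ball around $y$ can only have fewer points, so the $<2^{t^*}$ bound and hence the lemma still go through. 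In short, your proof is sound and goes well beyond what the paper itself provides.
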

\begin{theorem}\label{thm:detbound}
Let $U$ be a polarized $n$-market profile. Then, $\xi_{[U]} = O(\log |\X|)$ where $\X$ is a generating metric space for $U$ of minimum cardinality. Notably, $\xi_{[U]} = O(n^2 \log n)$.
\end{theorem}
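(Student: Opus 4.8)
The plan is to reduce the general polarized case to the Euclidean case already established in Theorem~\ref{thm:banach}, using Bourgain's embedding (Theorem~\ref{thm:bourgain}) as the bridge; the one delicate point is that the embedding must not scramble any agent's preference order. By Equation~\ref{eq:robust} (taking $[U] = (U, U_W)$; the case $[U] = (U_M, U)$ is symmetric) it suffices to upper bound
\[
\Xi(U) \;:=\; \min_{R \in \SR^n}\ \min_{a,x,x' \in [n]\,\mid\, xR_ax'} \frac{U[R](a,x')}{U[R](a,x)},
\]
since $\xi_{[U]} \le \Xi(U)$. Fixing a generating metric space $\X$ for $U$ with maps $\alpha_R,\beta_R$, the identity $U[R](a,x) = -\dX(\alpha_R(a),\beta_R(x))$ turns this into $\Xi(U) = \min_R \min_{a,x,x' \mid xR_ax'} \dX(\alpha_R(a),\beta_R(x'))/\dX(\alpha_R(a),\beta_R(x))$, so the goal is $\Xi(U) = O(\log|\X|)$.

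Let $L = O(\log|\X|)$ be the distortion promised by Bourgain's theorem for $\X$, and $T : \X \to \ell_2$ the corresponding injection, so $\dX(p,q) \le \norm{T(p)-T(q)}_2 \le L\,\dX(p,q)$ for all $p,q \in \X$. First dispose of the easy case: if $\Xi(U) \le L$ we are already done. Otherwise $\Xi(U) > L$, and I claim this forces $T$ to preserve every preference order. Indeed, whenever $xR_ax'$, the embedded distance from $\alpha_R(a)$ to $\beta_R(x')$ is at least $\dX(\alpha_R(a),\beta_R(x')) \ge \Xi(U)\,\dX(\alpha_R(a),\beta_R(x)) > L\,\dX(\alpha_R(a),\beta_R(x))$, which strictly exceeds the embedded distance from $\alpha_R(a)$ to $\beta_R(x)$ (at most $L\,\dX(\alpha_R(a),\beta_R(x))$). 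Hence the profile $U'$ defined by $U'[R](a,x) := -\norm{T(\alpha_R(a))-T(\beta_R(x))}_2$ is strict and satisfies $\SR(U'[R]) = R$ for every $R \in \SR^n$, so $U'$ is a legitimate $n$-market profile whose generating metric space $T(\X)$ lies inside the Banach space $\ell_2$.

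Now apply Theorem~\ref{thm:banach} to $U'$: the Condorcet-cycle computation in its proof bounds precisely the quantity $\Xi(\cdot)$, so $\Xi(U') \le 3$. It remains to compare the two profiles entry by entry. For any tuple with $xR_ax'$, Bourgain's inequalities enlarge the numerator ($\dX \le \norm{T(\cdot)-T(\cdot)}_2$) and shrink the denominator ($\dX \ge \norm{T(\cdot)-T(\cdot)}_2/L$), giving
\[
\frac{\dX(\alpha_R(a),\beta_R(x'))}{\dX(\alpha_R(a),\beta_R(x))} \;\le\; L\cdot\frac{\norm{T(\alpha_R(a))-T(\beta_R(x'))}_2}{\norm{T(\alpha_R(a))-T(\beta_R(x))}_2},
\]
and minimizing over all such tuples yields $\Xi(U) \le L\,\Xi(U') \le 3L = O(\log|\X|)$. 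In either case $\xi_{[U]} \le \Xi(U) = O(\log|\X|)$. The final assertion follows by taking $\X$ to be a generating metric space of minimum cardinality and invoking the bound $|\X| \le 2n(n!)^n$ from Theorem~\ref{thm:genspace}, so that $\log|\X| = O(n\log(n!)) = O(n^2\log n)$.

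The main obstacle is exactly the order-preservation step. Bourgain's map is only guaranteed to be a non-contraction with multiplicative distortion $L$, so a priori it could invert some agent's ranking and destroy the very structure we need. The resolution is the dichotomy between $\Xi(U) \le L$, handled trivially, and $\Xi(U) > L$, in which the preference gaps are so much larger than $L$ that no inversion can occur; this dichotomy is the conceptual heart of the reduction. The remaining ingredients --- verifying that $U'$ is a valid market profile, reading $\Xi(U') \le 3$ off the proof of Theorem~\ref{thm:banach}, and bounding $\log(n!) \le n\log n$ --- are routine.
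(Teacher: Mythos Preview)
Your proof is correct and follows essentially the same route as the paper: define the Euclidean profile $U'$ through Bourgain's embedding $T$, invoke Theorem~\ref{thm:banach} to get $\Xi(U')\le 3$, and use the distortion bound to conclude $\Xi(U)\le L\cdot\Xi(U')$. The one place you are more careful than the paper is in verifying that $U'$ is actually an $n$-market profile (i.e., that $\SR(U'[R])=R$ for every $R$) before invoking Theorem~\ref{thm:banach}; the paper simply asserts this without comment. Your dichotomy on $\Xi(U)\lessgtr L$ is a clean way to handle it, though note it is slightly more than needed: if $T$ ever inverts a ranking $xR_ax'$, then the corresponding ratio $U'[R](a,x')/U'[R](a,x)\le 1$, so the minimum used in the final inequality is automatically $\le 3$ anyway. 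Either way, the approaches coincide.
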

\begin{proof}
Let $T: \X \to \ell_2$ be as given in Theorem \ref{thm:bourgain}. Let $U'$ be the $n$-market profile given by
\[U'[R](a,x) := -\norm{(T \circ \alpha_R)(a) - (T \circ \beta_R)(x)}_2\]
where $\alpha_R: [n] \to \X$ and $\beta_R: [n] \to \X$ are the maps into the generating metric space. By Theorem \ref{thm:banach}, we must have that $\xi_{[U']} \le 3$, since the maps $T \circ \alpha_R$ and $T \circ \beta_R$ yield a generating metric space for $U'$ that is a subset of $\ell_2$. It then follows that for any $R \in \SR^n$ and $a,x,x' \in [n]$ such that $xR_ax'$
\[\frac{U[R](a,x')}{U[R](a,x)} \le O(\log |\X|)\frac{U'[R](a,x')}{U'[R](a,x)}\]
whence it follows by Equation \ref{eq:robust} that
\[\xi_{[U]} \le O(\log |\X|)\xi_{[U']} \implies \xi_{[U]} = O(\log |\X|)\]
as desired. By Theorem \ref{thm:genspace}, there exists a generating metric space $\X$ for $U$ of size $2n(n!)^n$. Thus, 
\[\xi_{[U]} = O(\log |\X|) = O(\log(2n(n!)^n)) = O(n^2\log n)\]
\end{proof}

We again apply Proposition \ref{prop:adm} to get a corollary
\begin{corollary}\label{cor:rob}
Suppose that $\set{[U_n]}_{n=1}^\infty$ is a collection of matching markets, where $U_n$ is a polarized $n$-market profile. Then, if $H$ is admissible for the collection, then $H(n) = O(n^2 \log n)$. Conversely, given an $H(n)$ that is admissible for the collection, we must have that $|\X_n| = \Omega(e^{H(n)})$, where $\X_n$ denotes the generating metric space for $U_n$.
\end{corollary}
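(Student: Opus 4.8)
The plan is to derive both claims directly from Theorem \ref{thm:detbound} and Proposition \ref{prop:adm}; the content is purely a matter of chaining the asymptotic bounds together in the correct direction, so there is little new work to do.

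First I would invoke Proposition \ref{prop:adm}: since $H$ is by hypothesis admissible for the collection $\set{[U_n]}_{n=1}^\infty$, we obtain $H(n) = O\pa{\xi_{[U_n]}}$. Next I would apply Theorem \ref{thm:detbound} to each $U_n$ — which is legitimate precisely because each $U_n$ is polarized — to get $\xi_{[U_n]} = O(\log|\X_n|)$ for a generating metric space $\X_n$ of minimum cardinality, together with the explicit estimate $\xi_{[U_n]} = O(n^2\log n)$ that comes from plugging the bound $|\X_n| \le 2n(n!)^n$ of Theorem \ref{thm:genspace} into the logarithm. Transitivity of $O(\cdot)$ then yields $H(n) = O\pa{\xi_{[U_n]}} = O(n^2\log n)$, establishing the first statement.

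For the converse I would again start from $H(n) = O\pa{\xi_{[U_n]}} = O(\log|\X_n|)$, so that there is a constant $c$ with $H(n) \le c\log|\X_n|$ for all sufficiently large $n$; rearranging gives $|\X_n| \ge e^{H(n)/c}$, i.e. $|\X_n| = \Omega\pa{e^{H(n)}}$ in the sense that $\log|\X_n| = \Omega(H(n))$. I would also remark that since any generating metric space for $U_n$ has cardinality at least that of a minimum-cardinality one, this lower bound in fact holds for every generating metric space $\X_n$, not merely the minimal choice used in Theorem \ref{thm:detbound}.

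There is no real obstacle here; the two points that deserve a sentence of care are (i) getting the direction of Proposition \ref{prop:adm} right — it bounds $H$ above by a constant multiple of the robustness, not the other way around — and (ii) flagging that ``$|\X_n| = \Omega\pa{e^{H(n)}}$'' should be understood as ``$|\X_n|$ grows at least exponentially in $H(n)$'', equivalently $\log|\X_n| = \Omega(H(n))$, since the constant $c$ from the $O(\cdot)$ bound lands in the exponent rather than appearing as a multiplicative prefactor.
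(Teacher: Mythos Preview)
Your proposal is correct and follows exactly the approach the paper intends: the paper simply states the corollary as an immediate consequence of applying Proposition~\ref{prop:adm} to Theorem~\ref{thm:detbound}, and you have spelled out that chain of inequalities carefully. Your added remarks---that the direction of Proposition~\ref{prop:adm} gives $H(n)=O(\xi_{[U_n]})$, that the constant from the $O(\cdot)$ bound lands in the exponent so the conclusion is really $\log|\X_n|=\Omega(H(n))$, and that the lower bound extends from the minimum-cardinality generating space to any generating space by monotonicity---are all accurate and sharpen the exposition beyond what the paper provides.
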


In other words, if the task of learning one's preferences becomes more difficult by a factor that asymptotically exceeds $n^2 \log n$ as the number of agents $n$ in the market increases, then it is unlikely that stable matching mechanisms will function properly in large markets. Further, if we would like guarantees on the communication requirements for large markets given that the hardness of preference learning scales with $H(n)$ as the size of the matching market $n$ increases, then the corresponding spatial structure of the market must increase in size exponential to $H(n)$. This result demonstrates that the underlying structure in the market must quickly become increasingly complex for stable matching mechanisms to function properly in large markets. We show a similar statement in the next section as it relates to probabilistic admissibility.  
\section{Upper Bounds on Probabilistic Robustness}\label{sec:probounds}
The main theoretical tool we used to generalize this result to the general class of polarized matching markets was Bourgain's embedding -- a result primarily used in the theoretical computer science literature to speed up the running time of algorithms that work with geometric inputs. In this section, we apply a similar approach to bound the probabilistic robustness $\xi_{[U]}^P$ for any polarized $U$ in terms of the topological complexity of its generating metric space. 

As before, our argument has two key steps. However, the first step differs slightly from the previous section. Rather than showing that a class of polarized markets has tightly bounded probabilistic robustness, we instead show that there exists some preference profile $R$, such that if one were to represent $R$ in a generating metric space, then the metric space must not have genus zero. Our argument is an explicit one: we construct such a preference profile $R$, and show that any arbitrary generating metric space that represents this profile must contain a $K_{3,3}$ minor. By Kuratowski's theorem, it then follows that the finite metric space is nonplanar, and has genus at least $1$. 

In the second step we again apply a result from the low-distortion embedding literature: this time we use Sidiropolous's probabilistic embedding of finite metric spaces into planar graphs \cite{sidiropoulos2010optimal}. This embedding allows one to map any metric space into a random planar graph such that in expectation, distances are preserved up to a multiplicative factor. Sidiropolous shows that this factor is logarithmic in the genus of the original metric space. 

By applying Sidiropolous' embedding to a generating metric space, we obtain a second generating metric space for another market. Thus, the embedding can be thought of as a probabilistic perturbation that transforms the original utilities to the utilities of the transformed market. Thinking of the embedding in this way, we can see that if a polarized market has true ordinal preferences given by $R$ (where $R$ is the aforementioned preference profile), then upon applying the embedding, the preference profile must deviate to some other preference profile $R'$. This occurs as we may not represent $R$ in a planar metric space -- yet by applying the embedding we transform the generating metric space into a planar metric space with probability $1$. Thus, we obtain a bound on the probabilistic robustness.

We make this intuition concrete and proceed with the first step. Recall that we aim to construct a preference profile $R$ such that any metric space that represents the profile has a $K_{3,3}$ minor. To show the existence of such a minor, we make use of the following Lemma.

\begin{lemma}\label{lem:join}
Let $U$ be an $n$-market profile with generating metric space $\X$. If some shortest paths $P$ from $\alpha_R(a)$ to $\beta_R(x)$ and $P'$ from $\alpha_R(a')$ to $\beta_R(x')$ intersect at any vertex $v \in \X$, then $xR_{a}x' \iff xR_{a'}x'$
\end{lemma}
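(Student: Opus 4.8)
The plan is to exploit the fact that in a generating metric space, the distance from $\alpha_R(a)$ to $\beta_R(x)$ equals $-U[R](a,x)$, and that a shortest path passing through a common vertex $v$ splits additively at $v$. Concretely, suppose $P$ is a shortest path from $\alpha_R(a)$ to $\beta_R(x)$ through $v$, and $P'$ is a shortest path from $\alpha_R(a')$ to $\beta_R(x')$ through $v$. Then I would write
\[-U[R](a,x) = \dX(\alpha_R(a),v) + \dX(v,\beta_R(x)), \qquad -U[R](a',x') = \dX(\alpha_R(a'),v) + \dX(v,\beta_R(x')).\]
The first step is to observe that the ``crossed'' pairs $(\alpha_R(a),\beta_R(x'))$ and $(\alpha_R(a'),\beta_R(x))$ can be bounded using the triangle inequality through $v$: $\dX(\alpha_R(a),\beta_R(x')) \le \dX(\alpha_R(a),v) + \dX(v,\beta_R(x'))$ and $\dX(\alpha_R(a'),\beta_R(x)) \le \dX(\alpha_R(a'),v) + \dX(v,\beta_R(x))$. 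Adding these two inequalities and comparing with the two displayed equalities yields $\dX(\alpha_R(a),\beta_R(x')) + \dX(\alpha_R(a'),\beta_R(x)) \le \dX(\alpha_R(a),\beta_R(x)) + \dX(\alpha_R(a'),\beta_R(x'))$, i.e. in utility terms
\[U[R](a,x') + U[R](a',x) \ge U[R](a,x) + U[R](a',x').\]

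The second step is to convert this single inequality into the claimed ordinal equivalence. Suppose $xR_ax'$, so $U[R](a,x) > U[R](a,x')$; I want to conclude $xR_{a'}x'$, i.e. $U[R](a',x) > U[R](a',x')$. Rearranging the inequality above gives $U[R](a',x) - U[R](a',x') \ge U[R](a,x) - U[R](a,x') > 0$, which is exactly $xR_{a'}x'$. The reverse implication ($xR_{a'}x' \implies xR_ax'$) follows by the symmetric rearrangement $U[R](a,x) - U[R](a,x') \ge U[R](a',x) - U[R](a',x') > 0$. Since preferences are strict (the generating metric space comes from a strict market profile, so $U[R](a,x) \ne U[R](a,x')$ whenever $x \ne x'$), the equivalence $xR_ax' \iff xR_{a'}x'$ follows, covering the remaining case $x'R_ax$ as its contrapositive.

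The main subtlety — not so much an obstacle as a point requiring care — is the direction of the inequality: I must make sure the triangle inequalities are applied to the \emph{crossed} pairs and the additive splitting (equality) is used for the \emph{uncrossed} pairs, since it is precisely this asymmetry that produces a one-sided inequality strong enough to transfer strict order. I should also be careful that $v$ lies on \emph{both} shortest paths so that both splittings are genuine equalities rather than mere inequalities; this is exactly the hypothesis that the paths intersect at $v$. One edge case to note is when $a = a'$ or $x = x'$, in which case the statement is trivial, so I may assume all four points are distinct. No heavy machinery is needed beyond the metric axioms and the definition of a generating metric space.
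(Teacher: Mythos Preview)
Your first step is fine: the additive splitting at $v$ for the two given shortest paths, together with the triangle inequality for the two crossed pairs, correctly yields
\[
U[R](a',x) - U[R](a',x') \;\ge\; U[R](a,x) - U[R](a,x').
\]
This single inequality does give the forward implication $xR_a x' \Rightarrow xR_{a'}x'$.

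The gap is in your reverse implication. You claim it ``follows by the symmetric rearrangement $U[R](a,x) - U[R](a,x') \ge U[R](a',x) - U[R](a',x')$,'' but no such rearrangement is available. The only symmetry present in the hypotheses is the swap $(a,x)\leftrightarrow(a',x')$, and under that swap your displayed inequality maps to \emph{itself}, not to its reverse. So from the triangle inequality alone you obtain only one direction; nothing you have derived rules out the configuration $U[R](a,x)-U[R](a,x')<0<U[R](a',x)-U[R](a',x')$, in which $xR_{a'}x'$ holds while $xR_a x'$ fails.

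The paper closes exactly this gap by arguing that the crossed distances are \emph{equalities} through $v$, not mere triangle-inequality upper bounds: it asserts that the concatenations $P(\alpha_R(a),v)\cup P'(v,\beta_R(x'))$ and $P'(\alpha_R(a'),v)\cup P(v,\beta_R(x))$ are themselves shortest paths, so that $d_\X(\alpha_R(a),\beta_R(x'))=d_\X(\alpha_R(a),v)+d_\X(v,\beta_R(x'))$ and likewise for the other crossed pair. With those equalities one gets
\[
U[R](a,x)-U[R](a,x') \;=\; d_\X(v,\beta_R(x'))-d_\X(v,\beta_R(x)) \;=\; U[R](a',x)-U[R](a',x'),
\]
from which the biconditional is immediate. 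Your argument supplies only the $\le$ half of this identity; to match the paper you would need to upgrade the crossed triangle inequalities to equalities (i.e., show that the path through $v$ actually realizes the crossed distances), which is precisely the additional step the paper takes.
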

\begin{proof}
For paths $P \subset \X$ of a finite metric space $\X$ containing vertices $u,v \in \X$, we use the notation $P(u,v) \subset P$ to denote the subpath of $P$ from $u$ to $v$. As $P$ and $P'$ are shortest paths, we must have that $P(v,\beta_R(x))$ and $P'(v,\beta_R(x'))$ are shortest paths. It then follows that $P(\alpha_R(a), v) \cup P'(v, \beta_R(x'))$ and $P'(\alpha_R(a'), v) \cup P(v, \beta_R(x))$
are also shortest paths. Thus, we have that
\[\scalebox{0.9}{\text{$\dX(\alpha_R(a),\beta_R(x')) = \dX(\alpha_R(a), v) + \dX(v, \beta_R(x')) \qquad \dX(\alpha_R(a),\beta_R(x)) = \dX(\alpha_R(a'), v) + \dX(v, \beta_R(x))$}}\]
It then follows by the definition of a generating metric space that
\begin{align*}
xR_{a}x' \iff d_\X(\alpha_R(a), \beta_R(x)) < \dX(\alpha_R(a), \beta_R(x')) &\iff d_\X(v, \beta_R(x)) < \dX(v, \beta_R(x'))\\
&\iff d_\X(\alpha_R(a'), \beta_R(x)) < \dX(\alpha_R(a'), \beta_R(x'))\\
&\iff xR_{a'}x'
\end{align*}
\end{proof}

Lemma \ref{lem:join} allows us to prove the existence of edges in a generating metric space given a preference profile. We now use this to show the existence of a ``topologically complex'' preference profile $R$. 

\begin{lemma}\label{lem:planar}
Let $U$ be an $n$-market profile for $n \ge 9$. There exists a preference profile $R \in \SR^n$ such that if maps $\alpha_R: [n] \to \X$ and $\beta_R: [n] \to \X$ exist such that $U[R](a,x) = -\dX(\alpha_R(a), \beta_R(x))$, then $\X$ has genus greater than zero.
\end{lemma}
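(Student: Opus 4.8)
The plan is to write down one preference profile $R$ that ``hard-codes'' a copy of $K_{3,3}$ among the agents and alternatives, and then show that \emph{every} generating metric space realizing $R$ must contain $K_{3,3}$ as a minor; nonplanarity follows from Kuratowski's theorem, and a nonplanar graph has genus at least $1$. Since the generating metric space is arbitrary, the lemma follows. Concretely, using $n\ge 9$, I would single out nine agents $m_{ij}$ ($i,j\in[3]$) and six alternatives $p_1,p_2,p_3,q_1,q_2,q_3$, and let $m_{ij}$ rank $p_i$ first and $q_j$ second, with the rest of every ranking (and all other agents and alternatives) filled in arbitrarily -- if $n>9$ the extra agents/alternatives are simply padded in at the bottom of the lists. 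The $p$'s and $q$'s will be the two sides of $K_{3,3}$, and $m_{ij}$ will carry the edge from $p_i$ to $q_j$.

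Fix any $\X$ with maps $\alpha_R,\beta_R$ satisfying $U[R](a,x)=-\dX(\alpha_R(a),\beta_R(x))$. The main tool is the contrapositive of Lemma~\ref{lem:join}: if two agents $a,a'$ disagree about a pair $\{x,x'\}$ (one ranks $x$ above $x'$, the other the reverse), then every shortest $\alpha_R(a)$--$\beta_R(x)$ path is vertex-disjoint from every shortest $\alpha_R(a')$--$\beta_R(x')$ path. The profile $R$ is chosen so that such disagreements are plentiful: for $i\ne k$ the agents $m_{ij},m_{kl}$ disagree on $\{p_i,p_k\}$ (each tops its own $p$) and on the mixed pair $\{p_i,q_l\}$ (the owner of $p_i$ tops $p_i$ while the owner of $q_l$ ranks $q_l$ above $p_i$); and for $j\ne l$ they disagree on $\{q_j,q_l\}$ (each ranks its own $q$ second, hence above the other's). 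I would then, for each $i$, form the connected set $B_{p_i}$ as the union over $j$ of a chosen shortest $\alpha_R(m_{ij})$--$\beta_R(p_i)$ path (all three contain $\beta_R(p_i)$), and likewise $B_{q_j}$ as the union over $i$ of shortest $\alpha_R(m_{ij})$--$\beta_R(q_j)$ paths. The disagreements above give $B_{p_i}\cap B_{p_k}=\emptyset$ and $B_{q_j}\cap B_{q_l}=\emptyset$ for $i\ne k$, $j\ne l$ (every constituent path-pair realizes a disagreement), while for every $i,j$ the agent-vertex $\alpha_R(m_{ij})$ lies on both sides and supplies an edge linking the $p_i$-side to the $q_j$-side. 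Contracting the six sets $B_{p_i},B_{q_j}$ then exhibits $K_{3,3}$ as a minor of $\X$.

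The delicate point -- what I expect to be the main obstacle -- is making the six branch sets \emph{genuinely pairwise disjoint}: as written, $B_{p_i}$ and $B_{q_j}$ both contain $\alpha_R(m_{ij})$, and since $m_{ij}$ and $m_{il}$ share the top alternative $p_i$, Lemma~\ref{lem:join} gives no control on how the shortest $\alpha_R(m_{ij})$--$\beta_R(p_i)$ and $\alpha_R(m_{il})$--$\beta_R(p_i)$ paths overlap near $\beta_R(p_i)$, nor on whether a $p_i$-tail of $m_{ij}$ re-enters some other $B_{p_i}$-path. The fix is to assign each shared agent-vertex $\alpha_R(m_{ij})$ to the $p$-side only, and then run a short case analysis -- applying Lemma~\ref{lem:join} to the remaining path pairs together with the shortest-path (triangle-equality) identities -- to show that the only way a $q$-tail could re-meet $B_{p_i}$ is through a pair of agents both of the form $m_{i\cdot}$, which the strict choice of $R$ (with $p_i$ the unique top of every $m_{i\cdot}$, and the $q$'s their strict seconds) must still be leveraged to exclude, possibly by choosing the shortest paths canonically. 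A couple of smaller checks are also needed: $\beta_R$ is injective on $\{p_i,q_j\}$ and the $\alpha_R(m_{ij})$ are distinct, both immediate from the generating identity since agents/alternatives that rank or are ranked differently have distinct utility vectors. Once the six branch sets are certified disjoint and pairwise joined, $\X$ is nonplanar by Kuratowski's theorem, so $g(\X)\ge 1$, which is the claim.
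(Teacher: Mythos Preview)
Your construction is different from the paper's. You use nine agents but only six alternatives, with three agents sharing each top choice $p_i$; the paper instead uses nine agents \emph{and} nine alternatives, giving agent $a_i$ the unique top choice $x_i$. In the paper the nine sets $\Ps(a_i,x_i)$ of shortest $\alpha_R(a_i)$--$\beta_R(x_i)$ paths are then pairwise disjoint by a single application of Lemma~\ref{lem:join} (distinct tops force disagreement), and each is contracted to a node. The $K_{3,3}$ is found among these nine nodes: second-preference paths supply most of the required edges and are shown disjoint via Lemma~\ref{lem:join}, while a three-way case split on whether certain \emph{third}-preference paths meet the $\Ps(a_4,x_4)$ or $\Ps(a_4,x_2)$ families furnishes the remaining three edges. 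So the paper spends three extra alternatives to buy pairwise-disjoint starting blobs up front, pushing the unavoidable overlap analysis to a later, more localized step; your route is more economical in alternatives but must confront the overlap immediately.

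There is a concrete gap exactly where you flag it. When both agents are of type $m_{i\cdot}$, Lemma~\ref{lem:join} yields nothing: $m_{ij}$ and $m_{ij'}$ both rank $p_i$ first, hence agree on $\{p_i,q_j\}$, so the contrapositive is vacuous and a shortest $\alpha_R(m_{ij})$--$\beta_R(q_j)$ path may genuinely meet an $\alpha_R(m_{ij'})$--$\beta_R(p_i)$ path at interior vertices. No refinement of $R$ or ``canonical'' choice of shortest paths can exclude this, contrary to what your last sentence suggests. The repair is not exclusion but \emph{trimming}: let $v_{ij}$ be the last vertex of your chosen $\alpha_R(m_{ij})$--$\beta_R(q_j)$ path lying in $B_{p_i}$ (it exists, since $\alpha_R(m_{ij})\in B_{p_i}$), and place only the open subpath beyond $v_{ij}$ into $B_{q_j}$. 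Your own Lemma~\ref{lem:join} checks already show that the full $q_j$-path avoids $B_{p_k}$ for $k\ne i$, and comparing with the path $\alpha_R(m_{i'j})\to\beta_R(q_j)$ for any $i'\ne i$ shows $\beta_R(q_j)\notin B_{p_i}$; hence the trimmed $B_{q_j}$ is nonempty, connected through $\beta_R(q_j)$, disjoint from every $B_{p_k}$ and every $B_{q_l}$ with $l\ne j$, and the edge out of $v_{ij}$ supplies the $B_{p_i}$--$B_{q_j}$ adjacency. With this step filled in your six-branch-set route goes through; as written, it does not.
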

\begin{proof}
We show that $\X$ has a $K_{3,3}$ minor, whence it follows that $\X$ may not be genus $0$. Let $R$ be any preference profile where

\begin{minipage}{0.44 \textwidth}
\[\scalebox{0.8}{$R: = \begin{array}{|c|c|c|c|c|c|c|c|c|}
\hline
a_1 & a_2 & a_3 & a_4 & a_5 & a_6 & a_7 & a_8 & a_9 \\
\hline
\hline
x_1 & x_2 & x_3 & x_4 & x_5 & x_6 & x_7 & x_8 & x_9 \\
x_4 & x_5 & x_6 & x_2 & x_3 & x_1 & x_4 & x_5 & x_6 \\
* & * & * & * & * & * & x_3 & x_1 & x_2 \\
* & * & * & * & * & * & * & * & * \\
\vdots & \vdots &\vdots &\vdots &\vdots &\vdots &\vdots &\vdots &\vdots\\
* & * & * & * & * & * & * & * & * \\
\hline
\end{array}$}\]
\end{minipage}
\begin{minipage}{0.54 \textwidth}
For $a,x \in [n]$, we use the notation $\Ps(a,x) \subset 2^\X$ to denote the collection of shortest paths from $\alpha_R(a)$ to $\beta_R(x)$. Notice that under the profile $R$, each agent $a_i$ prefers the alternative $x_i$ the most. It follows by Lemma \ref{lem:join} that the $\Ps(a_i,x_i)$ are disjoint subsets of $\X$. Contracting each of these subsets $\Ps(a_i,x_i)$ into a single node, we now show that this contraction of the graph $\X$ has a $K_{3,3}$ minor. 
\end{minipage}

\vspace{0.5em}

Below is a diagram of the contraction. To prove the existence of the minor, we show that (i) all of the black paths exist and do not intersect each other unless they share a source/destination, and (ii) at least one red path, one green path, and one blue path exists and does not intersect any other path unless they share a source/destination. It then follows that by contracting along the dashed black paths, we obtain a $K_{3,3}$ minor. It is acceptable for edges that share a source or destination to intersect, as we may still obtain a minor by contracting the path between the intersection point and the source/destination. 

\begin{minipage}{0.5 \textwidth}
\[\scalebox{0.6}{\begin{tikzpicture}
	\begin{pgfonlayer}{nodelayer}
		\node  (0) at (0, 0) {$\Ps(a_5, x_5)$};
		\node  (1) at (-4, 0) {$\Ps(a_4,x_4)$};
		\node  (2) at (4, 0) {$\Ps(a_6,x_6)$};
		\node  (3) at (-4, 3) {$\Ps(a_7,x_7)$};
		\node  (4) at (0, 3) {$\Ps(a_8,x_8)$};
		\node  (5) at (4, 3) {$\Ps(a_9,x_9)$};
		\node  (6) at (-4, -3) {$\Ps(a_1,x_1)$};
		\node  (7) at (0, -3) {$\Ps(a_2,x_2)$};
		\node  (8) at (4, -3) {$\Ps(a_3,x_3)$};
	\end{pgfonlayer}
	\begin{pgfonlayer}{edgelayer}
		\draw (1) to (6);
		\draw (0) to (7);
		\draw (2) to (8);
		\draw (1) to (7);
		\draw (0) to (8);
		\draw [dash dot, line width = 1pt] (3) to (1);
		\draw [dash dot, line width = 1pt] (4) to (0);
		\draw [dash dot, line width = 1pt] (5) to (2);
		\draw [bend left=20, color=red, line width = 1pt] (3) to (8); 
		\draw [color=red, line width = 1pt] (1) to (8); 
		\draw [color=green, line width = 1pt] (0) to (6); 
		\draw [color=green, line width = 1pt] (4) to (6); 
		\draw (2) to (6);
		\draw [color=blue, line width = 1pt] (2) to (7);  
		\draw [color=blue, line width = 1pt] (5) to (7); 
	\end{pgfonlayer}
\end{tikzpicture}}\]
\end{minipage}
\begin{minipage}{0.49\textwidth}
We first show the existence of the black paths. Observe that we have drawn a black path between the contracted nodes $\Ps(a_i, x_i)$ and $\Ps(a_j, x_j)$ if $x_j$ is the second ranked alternative by the agent $a_i$ under $R$. Since $\X$ is a metric space, the distance from $a_i \in \Ps(a_i, x_i)$ to $x_j \in \Ps(a_j, x_j)$ must be finite, so a black path must exist. We consider such paths that fall within the collection of \textit{shortest} paths $\Ps(a_i, x_j)$. 
\end{minipage}

\vspace{0.5em}

Having shown existence, we show that any of these shortest paths do not intersect any contracted noted $\Ps(a_k, x_k)$ where $k \not \in \set{i,j}$. To see this, we apply Lemma \ref{lem:join}. Recall that $x_j$ is agent $a_i$'s second ranked preference under $R$. We must therefore have that $\Ps(a_i, x_j) \cap \Ps(a_k, x_k) = \emptyset$  since $x_jR_{a_i}x_k$ but $x_kR_{a_k}x_j$. We can similarly see by Lemma \ref{lem:join} that any shortest path in $\Ps(a_i, x_j)$ does not intersect any shortest path in $\Ps(a_k, x_l)$ for $i \ne k$ and $j \ne l$, since $x_jR_{a_i}x_\ell$ but $x_lR_{a_k}x_j$. Thus, we may select black paths such that none of the paths intersect each other or a contracted node. 

Finally, we now show that at least one of red paths exist, and does not intersect any other path or contracted node. By symmetry, we obtain a similar guarantee for the green and blue paths. As before, we consider red paths from the collection of shortest paths $\Ps(a_7, x_3)$. Observe that $x_3$ is agent $a_7$'s third ranked preference under $R$, and that $x_4$ is agent $a_7$'s second ranked preference under $R$. It then follows by Lemma \ref{lem:join} that for all $i \ne 4$, the collection of shortest paths $\Ps(a_7, x_3) \cap \Ps(a_i, x_i) =\emptyset$ have no intersection since $x_3R_{a_7}x_i$, but $x_iR_{a_i}x_7$. Thus, none of our candidate red paths may intersect with any contracted node -- with the exception of the node $\Ps(a_4, x_4)$. 

We split into three cases. We first consider the case where at least one path $P$ in $\Ps(a_7, x_3)$ does not intersect with any paths in $\Ps(a_4, x_4)$ or $\Ps(a_4, x_2)$ (recall that $x_2$ is $a_4$'s second ranked preference). In this case, we may verify that by Lemma \ref{lem:join}, $P \cap \Ps(a_i, x_j) = \emptyset$ where $i \ne 7$, $j \ne 3$, and $x_j$ is $a_i$'s second ranked preference. That is, a ``top'' red path exists and does not intersect any of the contracted nodes or other paths as desired. Next, we consider the case where at least one path $P$ in $\Ps(a_7, x_3)$ intersects with just $\Ps(a_4, x_4)$. In this case, we get a ``bottom'' path that travels from the contracted node $\Ps(a_4, x_4)$ to the node $\Ps(a_3, x_3)$. By the same argument as in the first case, we still have that $P \cap \Ps(a_i, x_j) = \emptyset$ where $i \ne 7$, $j \ne 3$. Thus, this bottom path does not intersect any of the contracted nodes or other paths. Finally, we consider the case where all shortest paths in $\Ps(a_7, x_3)$ intersect $\Ps(a_4, x_2)$. Let $P$ be any such path. We extend the contracted node $\Ps(a_4, x_4)$ by contracting along the path from the node to the intersection point. We have now reduced this case to the previous one. By applying the same argument with the path $P$, we obtain the desired result.
\end{proof}

We now proceed with the second step in our argument, and use Sidiropolous's embedding \cite{sidiropoulos2010optimal} to extend this result as we did in Section \ref{sec:robustness}.
\begin{theorem}[Sidiropoulos]\label{thm:sid}
For any finite metric space $\X$, there exists an injective map $T: \X \to \X'$ where $\X'$ is a random planar graph such that for any $a,b \in \X$, 
\[1 \le \frac{\ex{d_{\X'}(a,b)}}{\dX(a,b)} \le O(\log g(\X))\]
\end{theorem}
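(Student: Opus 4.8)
Theorem \ref{thm:sid} is the stochastic planarization theorem of \cite{sidiropoulos2010optimal}, which we invoke as a black box; nonetheless, here is the shape of the argument one would give. First reduce to graph metrics: by Definition \ref{def:metric} we may take $\X$ to be the shortest-path metric of a finite edge-weighted graph $G$, and we fix a cellular embedding of $G$ into an orientable surface of genus $g=g(\X)$. The random target $\X'$ will be the shortest-path metric of a random graph obtained from a bounded-degree refinement of $G$ by only \emph{deleting edges} (or replacing them by longer detours), never by inserting shortcuts. Consequently $d_{\X'}(a,b)\ge\dX(a,b)$ holds pointwise, which gives the lower inequality for free; all the content is in the bound $\ex{d_{\X'}(a,b)}\le O(\logp{g})\,\dX(a,b)$.

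The engine is a single randomized \emph{genus-reduction} primitive. Given a graph embedded in a surface of genus $h>0$, compute a short system of non-contractible cycles (a cut graph), thicken it to an annular collar, and delete the edges crossing a level set of the ``distance to the cut locus'' function at a radius drawn uniformly from a geometric progression of scales. Cutting the surface along such a collar drops the genus by a constant factor, and the usual stochastic-decomposition estimate controls the cost: for a fixed pair $a,b$ with $\dX(a,b)=\ell$, the probability that a cut at scale $s$ severs every shortest $a$--$b$ path is $O(\ell/s)$, and when no shortest path is severed the pair's distance is unchanged; summing over the geometric range of scales bounds the expected multiplicative stretch from one application by $1+O(1)$.

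Iterating the primitive down a \emph{balanced} decomposition of the cut graph — always cutting along a cycle that halves the residual genus (or can be charged to such a halving) — produces a recursion of depth $O(\logp{g})$ whose leaves are planar; reattaching the cut pieces along their boundary identifications yields the random planar graph $\X'$. The main obstacle is the distortion bookkeeping. Running the $O(\logp{g})$ cuts independently would multiply the stretch by a constant per level and give only $g^{O(1)}$, not $O(\logp{g})$. The fix is to couple the random scales across all levels so that each pair $a,b$ is, in expectation, affected by only $O(1)$ cuts at scales comparable to $\dX(a,b)$ over the entire hierarchy; the level-by-level contributions are then additive, $O(1)$ at each of the $O(\logp{g})$ levels. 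Establishing this additivity — that the hierarchical cut structure genuinely has logarithmic depth and the right padding so that stretch does not compound multiplicatively — is exactly the technical heart of \cite{sidiropoulos2010optimal}, and is the step I would expect to be hardest to reproduce.
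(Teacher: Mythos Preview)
The paper does not prove Theorem~\ref{thm:sid}; it is stated as an external result of \cite{sidiropoulos2010optimal} and used as a black box in the proof of the subsequent theorem. You correctly identify this in your first sentence, so there is nothing to compare against. Your sketch of the underlying argument is extra content the paper does not attempt, and its broad outline---randomized genus-halving cuts iterated to depth $O(\log g)$ with coupled scales so that stretch adds rather than multiplies---is a reasonable high-level summary of the stochastic planarization machinery, though the paper itself neither confirms nor relies on any of those internal details.
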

\begin{theorem}
Let $U$ be an $n$-market profile with generating metric space $\X$. Then, $\xi_{[U]}^P = O(\log g(\X))$ and $\xi_{[U]} = O(n^2 \log g(\X))$.
\end{theorem}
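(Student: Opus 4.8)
The plan is to mirror the structure of Theorem~\ref{thm:detbound}, replacing Bourgain's embedding with Sidiropoulos's probabilistic embedding (Theorem~\ref{thm:sid}) and replacing the deterministic robustness bound of Theorem~\ref{thm:banach} with the qualitative obstruction established in Lemma~\ref{lem:planar}. Let $\X$ be a generating metric space for $U$ with maps $\alpha_R, \beta_R$. Apply Theorem~\ref{thm:sid} to obtain an injective $T : \X \to \X'$ into a random planar graph with $1 \le \ex{d_{\X'}(a,b)}/\dX(a,b) \le O(\log g(\X))$, and define the random $n$-market profile $U'$ by $U'[R](a,x) := -d_{\X'}((T\circ\alpha_R)(a), (T\circ\beta_R)(x))$. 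Since $T\circ\alpha_R$ and $T\circ\beta_R$ witness that $\X'$ generates each realization of $U'$, every realization of $U'$ has a genus-zero generating metric space.

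First I would argue that $U'$ is not a probabilistic $C$-perturbation of $U$ for any $C$, or rather — more carefully — that feeding $U'$ into the matching machinery cannot preserve the stable assignments with positive probability for an appropriate initialization. Concretely: take $n \ge 9$ and let $R$ be the preference profile produced by Lemma~\ref{lem:planar} (for $n < 9$ the statement is vacuous since $g(\X) \ge 0$ and the bound $O(\log g(\X))$ need only hold asymptotically, or one pads the market). Since $\X'$ is planar (genus zero) with probability $1$, Lemma~\ref{lem:planar} forces $\SR(U'[R]) \ne R$ almost surely: if the ordinal data were preserved, then $T\circ\alpha_R, T\circ\beta_R$ would exhibit $R$ in a genus-zero metric space, contradicting the Lemma. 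Then by Lemma~\ref{lem:ne} there is an $R^W$ with $\Phi(R, R^W) \ne \Phi(\SR(U'[R]), \SR(U_W[R^W]))$ — and this holds with probability $1$, not merely positive probability.

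Next I would convert the multiplicative distortion of the embedding into a bound on $\xi^P_{[U]}$. The embedding $T$, viewed through the utilities, acts as a probabilistic perturbation $\delta$ with $\delta U[R] = U'[R]$ (up to the usual sign/zero bookkeeping: $T$ is injective and only expands distances, so $U'[R](a,x) \le U[R](a,x)$ with equality at zero, hence $\delta$ is a legitimate probabilistic perturbation by the remark following Definition~\ref{def:probcpert}), and $\ex{\delta(a,x)} \le O(\log g(\X))$ by the distortion bound of Theorem~\ref{thm:sid}. If $[U]$ were probabilistically $C$-robust for $C$ exceeding this $O(\log g(\X))$ factor, then this particular $\delta$ (with the initialization $R, R^W$ above) would preserve $\Phi$ with positive probability, contradicting the previous paragraph. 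Hence $\xi^P_{[U]} = O(\log g(\X))$. The deterministic statement $\xi_{[U]} = O(n^2 \log g(\X))$ then follows immediately by combining this with Theorem~\ref{thm:ub}, which gives $\xi_{[U]} = O(n^2)\,\xi^P_{[U]}$.

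The main obstacle I anticipate is the step asserting that $\SR(U'[R]) \ne R$ \emph{almost surely} and, more delicately, that one can then name a \emph{single} initialization $(R, R^W)$ (or a distribution over initializations) against which the perturbation $\delta$ fails for \emph{every} realization — Definition~\ref{def:prob} quantifies over joint distributions of $(R^M, R^W)$ and $\delta$, so the argument must produce one bad coupled instance. The subtlety is that $\SR(U'[R])$ is itself random, so the $R^W$ supplied by Lemma~\ref{lem:ne} depends on the realization; one needs to check either that a single $R^W$ works for all realizations (e.g. because Lemma~\ref{lem:planar}'s obstruction pins down which pair $x_i, x_j$ gets reordered), or that one may let $R^W$ be a random variable measurable with respect to $\delta$, which is permitted since Definition~\ref{def:prob} allows an arbitrary joint distribution. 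A careful reading of Lemma~\ref{lem:planar} and Lemma~\ref{lem:ne} should resolve this, but it is the place where the probabilistic bookkeeping is genuinely more intricate than in the deterministic Theorem~\ref{thm:detbound}.
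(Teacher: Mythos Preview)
Your proposal is correct and follows essentially the same route as the paper: use Sidiropoulos's embedding as a probabilistic perturbation $\delta_M$, invoke Lemma~\ref{lem:planar} to force $\SR(\delta_M U[R^M]) \ne R^M$ almost surely since the image is planar, then apply Lemma~\ref{lem:ne} to obtain a (random, $\delta_M$-measurable) $R^W$ witnessing $\Phi$ changes with probability $1$, and finally appeal to Theorem~\ref{thm:ub} for the deterministic bound. The subtlety you flagged about $R^W$ depending on the realization of $\delta$ is exactly the one the paper handles, and your resolution---that Definition~\ref{def:prob} permits arbitrary joint distributions, so $R^W$ may be coupled to $\delta_M$---is precisely what the paper does.
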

\begin{proof}
Without loss of generality, suppose $[U] = (U, U_W)$. It suffices to construct a joint distribution over preference profiles $R^M, R^W \in \SR^n$ and probabilistic $O(\log g(\X))$-perturbations $\delta_M$ and $\delta_W$ such that \[\pr{\Phi(R^M, R^W) = \Phi\pa{\SR(\delta_M U[R^M]), \SR(\delta_W U[R^W])}} = 0\]
Let $T$ be as given in \ref{thm:sid}. For $a,x \in [n]$, We choose 
\[\delta_M(a,x) := \frac{d_{T(\X)}(\alpha_R(a),\beta_R(x))}{\dX(\alpha_R(a),\beta_R(x))}\]
By Theorem \ref{thm:sid}, $\delta_M$ is a probabilistic $O(\log g(\X))$-perturbation. We let $\delta_W$ be the trivial $1$-perturbation. Let $R^M \in \SR^n$ be as given in Lemma \ref{lem:planar}. By Lemma \ref{lem:planar}, we must have that with probability $1$, 
\[\SR(\delta_MU[R^M]) = R\pa{-\delta_M \pa{\dX \circ (\alpha_R, \beta_R)}} = R\pa{-\pa{d_{T(\X)} \circ (\alpha_R, \beta_R)}} \ne R^M\]
since $T(\X)$ is planar with probability $1$. By Lemma \ref{lem:ne}, there exists a preference profile $R'$ such that $\Phi(\SR(\delta_MU[R^M]), R') \ne \Phi(R^M, R')$. Choosing $R^W = R'$, we have that
\[\pr{\Phi(R^M, R^W) = \Phi\pa{\SR(\delta_M U[R^M]), \SR(\delta_W U[R^W])}} = 0\]
as desired. By Theorem \ref{thm:detbound}, we have that $\xi_{[U]} = O(n^2\xi_{[U]}^P)$ whence the second part of the theorem follows.
\end{proof}

By Proposition \ref{prop:adm}, we obtain a corollary
\begin{corollary}
Suppose that $\set{[U_n]}_{n=1}^\infty$ is a collection of matching markets, where $U_n$ is a polarized $n$-market profile with generating metric space $\X_n$. If $H^P$ is probabilistically admissible for the collection, then the genus $g(\X_n) = \Omega(e^{H^P(n)})$. Similarly, if $H$ is admissible for the collection, then the genus $g(\X_n) = \Omega(e^{H(n)/n^2})$.
\end{corollary}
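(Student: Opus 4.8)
The plan is to obtain this as a direct consequence of the immediately preceding theorem (the bound $\xi_{[U]}^P = O(\log g(\X))$ and $\xi_{[U]} = O(n^2\log g(\X))$) together with Proposition~\ref{prop:adm}. I do not expect any new mathematical input to be needed, so the write-up will be short, and the only thing to be careful about is lining up the hypotheses of the two results.

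First I would handle the probabilistic case. Assuming $H^P$ is probabilistically admissible for $\set{[U_n]}_{n=1}^\infty$, Proposition~\ref{prop:adm} gives $H^P(n) = O\pa{\xi^P_{[U_n]}}$. Since each $U_n$ is polarized it has the generating metric space $\X_n$, so the preceding theorem applies and yields $\xi^P_{[U_n]} = O(\log g(\X_n))$. Composing the two bounds gives $H^P(n) = O(\log g(\X_n))$, i.e.\ $H^P(n) \le c\log g(\X_n)$ for some constant $c>0$ and all large $n$; rearranging, $\log g(\X_n) = \Omega(H^P(n))$, which the statement records (following the convention already used in Corollary~\ref{cor:rob}, where $\log|\X_n| = \Omega(H(n))$ is written as $|\X_n| = \Omega(e^{H(n)})$) as $g(\X_n) = \Omega(e^{H^P(n)})$.

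The deterministic case is identical, using the other clauses of the two results: admissibility of $H$ gives $H(n) = O\pa{\xi_{[U_n]}}$ via Proposition~\ref{prop:adm}, the preceding theorem gives $\xi_{[U_n]} = O(n^2\log g(\X_n))$, and composing yields $H(n) = O(n^2\log g(\X_n))$, hence $\log g(\X_n) = \Omega(H(n)/n^2)$, i.e.\ $g(\X_n) = \Omega(e^{H(n)/n^2})$.

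The closest thing to an obstacle here is purely a bookkeeping one: Proposition~\ref{prop:adm} is stated for a collection of matching markets $\set{(U_M,U_W)_n}$, whereas we are handed only the market profiles $U_n$. I would note that $[U_n]$ denotes \emph{any} matching market having $U_n$ as one side, and that the preceding theorem's robustness bounds were established using only polarity of a single constituent profile — so the chain of asymptotic estimates above holds for every such $[U_n]$, independently of how the other side of each market is filled in. With that remark in place the proof is just a two-line composition of asymptotic inequalities.
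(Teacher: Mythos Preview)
Your proposal is correct and matches the paper's approach exactly: the paper simply writes ``By Proposition~\ref{prop:adm}, we obtain a corollary'' and states the result without further argument, so your composition of Proposition~\ref{prop:adm} with the preceding theorem's bounds $\xi_{[U]}^P = O(\log g(\X))$ and $\xi_{[U]} = O(n^2\log g(\X))$ is precisely what is intended. Your remark about the $[U_n]$ bookkeeping is a reasonable clarification but not something the paper pauses on.
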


This statement can be seen as a direct extension of Corollary \ref{cor:rob} for probabilistic admissibility. If we would like a probabilistic guarantee on the behavior of stable matching mechanisms in large markets where preference learning becomes harder on the order of $H^P(n)$, then the underlying topological complexity of the market must increase exponentially in $H^P(n)$. Applying the economic intuition about genus described in Section \ref{sec:polarization}, this implies more informally that for stable matching mechanisms to function in large markets, the agents must quickly converge to having preferences that are on the same scale as each other. 
\section{Commentary and Open Problems} \label{sec:conclusion}

In this paper, we introduced a framework for understanding how communication requirements for stable matching depend on some higher-order structure in the market. In a broad family of markets where preference has a spatial interpretation, we showed that the underlying complexity of the market must increase exponentially in the difficulty of preference learning for stable matching mechanisms to function properly in a finite amount of time. We considered two different measures of complexity: the size and genus of the underlying metric space. We gave an interpretation of each of these quantities. The size of the space can be thought of as a proxy for the number of agent and alternative archetypes. A generalization of Euler's formula shows that the genus is inversely related to the extent to which agents express strength of preference on different scales. By proving that a generating metric space of a certain size always exists, we also obtained a bound depending only on the size of the market. A summary of these results are given in Table \ref{oresults}.

\begin{table}[h!] 
\centering
\resizebox{0.8\columnwidth}{!}{\begin{minipage}{\textwidth}\begin{center}

    \begin{tabular}{|c||c|c|c|}
        \hline
         & $|\X|$ & $g(\X)$ & $n$  \\
        \hline
        \hline
        $T_{(U_M, U_W)}$ & $\Omega\pa{D^{-1}\pa{\frac{H(n)}{\log |\X|}}}$ & $\Omega\pa{D^{-1}\pa{\frac{H(n)}{n^2\log g(\X)}}}$ & $\Omega\pa{D^{-1}\pa{\frac{H(n)}{n^2\log n}}}$ \\
        \hline
        $T^P_{(U_M, U_W)}$ & $\Omega\pa{D^{-1}\pa{\frac{H(n)}{\log |\X|}}}$ & $\Omega\pa{D^{-1}\pa{\frac{H(n)}{\log g(\X)}}}$ & $\Omega\pa{D^{-1}\pa{\frac{H(n)}{n^2\log n}}}$ \\
        \hline
        $H(n)$ & $O(\log |\X_n|)$ & $O(n^2\log g(\X_n))$ & $O(n^2\log n)$ \\
        \hline 
        $H^P(n)$ & $O(\log |\X_n|)$ & $O(\log g(\X_n))$ & $O(n^2\log n)$ \\
        \hline
    \end{tabular}\end{center}\caption{\label{oresults}Lower bounds on the (probabilistic) communication requirements for stable matching and upper bounds on admissible $H$ and probabilistically admissible $H^P$.}\end{minipage}
    }
\end{table}
\vspace{-1.5em}
Our results leave several open problems. In our analysis, we used results that proved the existence of low-distortion embeddings to show upper bounds on the robustness of matching markets. As it turns out, both of the low-distortion embedding existence results that we used (Bourgain's embedding and Sidiroplous's embedding) are optimal \cite{abraham2006advances, sidiropoulos2010optimal} -- one may not prove a tighter bound on the robustness using the same technique. More interestingly, the optimality of both of these embeddings is proven by the same counter-example: constant-degree vertex expander graphs. That is, constant-degree expanders are among the ``hardest'' metric spaces to distort into simpler structures. Thus, we conjecture that one may generate optimal \textit{lower bounds} on robustness by considering markets with a generating metric space given by a constant-degree expander. A proof of this conjecture would have tangible economic significance, as it would provide a construction for a stable matching market that has minimal communication requirements. Such a construction may be useful in the field of mechanism design. 

Another open problem relates the definition of our model in Section \ref{subsec:model}. We left the definition of the decay $D$ nearly arbitrary in the setup of our model, and instead used its properties to bound admissible values of $H$ given the robustness in Proposition \ref{prop:adm}. However, it might be possible to develop an empirical study that measures some analogue of $D$ and $H$ explicitly in a given market. Given this data, our bounds on $T_{(U_M, U_W)}$ and $T^P_{(U_M, U_W)}$ may be used directly by market designers to understand whether the stable matching mechanism is failing due to impractically high communication requirements, or if the market may in fact be cleared earlier due low communication requirements. 

\newpage

\bibliographystyle{ACM-Reference-Format}
\bibliography{refs}

\newpage
\appendix
\section{Proof of Theorem \ref{thm:ub}}\label{proof:ub}

\begin{proof}
For $R \in \SR^n$ and $a,i \in [n]$, we use the notation $X(R, a, i)$ to denote the alternative $x \in [n]$ that is ranked at position $i$ by agent $a$ under the preference profile $R$. We first show that $(U_M, U_W)$ is probabilistically $C$-robust. Since $(U_M, U_W)$ is a $\pa{2n(n-1)(C-1) + 1}$-robust matching market, by Theorem \ref{thm:c}, we have that for all preference profiles $R \in \SR^n$,and $m,w,w' \in [n]$, \[wR_mw' \implies \pa{2n(n-1)(C-1) + 1}U_M[R](m,w) > U_M[R](m,w')\] and similarly for $U_W[R]$. For any joint distribution over the preference profiles and probabilistic $C$-perturbations $R^M$, $R^W$, $\delta_M$, and $\delta_W$, we show that the probability that the ordinal rankings are unchanged by the perturbation\[\pr{\SR(\delta_M U_M[R])= R^M \cap \SR(\delta_W U_W[R])= R^M} > 0\] whence it follows that $(U_M, U_W)$ is probabilistically $C$-robust. Let $\delta_M$ and $\delta_W$ be arbitrary probabilistic $C$-perturbations. We define a collection of $2n(n-1)$ random variables. For $a \in [2n]$ and $i \in [n-1]$, we define
\[D_{(a, i)} = \begin{cases}\delta_M\pa{a, X(R^M, a, i)}  & a \le n \\ \delta_W\pa{a - n, X(R^W, a-n, i)} & a > n \end{cases}\]
which denote the multiplicative distortions induced by $\delta_M$ and $\delta_W$. 
It suffices to show that \[\pr{\max_{(a, i) \in [2n] \times [n-1]} D_i\le 2n(n-1)(C-1) + 1} > 0\]
as by $\pa{2n(n-1)(C-1) + 1}$-robustness, this implies that \[\pr{\SR(\delta_M U_M[R])= R^M \cap \SR(\delta_W U_W[R])= R^M} > 0\]
The random variables $\delta_M(m, X(R^M, m, n))$ and $\delta_W(w, X(R^W, w, n))$ need not be included as agents may not overestimate their utility under a distortion, and $w_n^m$ and $m_n^w$ are already ranked last. Note that for all $i$, $\ex{D_{(a, i)}} \le C$ and $D_{(a, i)} \ge 1$. It then follows that
\[\ex{\sum_{(a, i) \in [2n] \times [n-1]} D_{(a,i)}} = \sum_{(a, i) \in [2n] \times [n-1]} \ex{D_{(a,i)}} \le \sum_{(a, i) \in [2n] \times [n-1]} C = 2n(n-1)C\]
whence it follows that
\[\pr{\sum_{(a,i) \in [2n] \times [n-1]} D_{(a, i)} \le 2n(n-1)C} > 0\]
Since $D_i \ge 1$, we have that
\begin{align*}\sum_{(a,i) \in [2n] \times [n-1]} D_{(a,i)} \le 2n(n-1)C \implies \max_{(a,i) \in [2n] \times [n-1]} D_{(a,i)} &\le 2n(n-1)C - (2n(n-1)-1) \\&= 2n(n-1)(C-1) + 1\end{align*}
whence the desired result follows.

Next, we show that for any $\epsilon > 0$, there exists an $n$-matching market $(U_M, U_W)$ that is $\pa{2n(n-1)(C-1) + 1}$-robust, but not probabilistically $(1+\epsilon)C$-robust. Let $(U_M, U_W)$ be such that for every $R \in \SR^n$, $a \in [n]$, and $i \in [n-1]$
\[\frac{U_M[R](a, X(R^M, a, i+1))}{U_M[R](a, X(R^M, a, i))} = \frac{U_W[R](a, X(R^W, a, i+1))}{U_W[R](a, X(R^W, a, i))} = 2n(n-1)\pa{\pa{1 + \frac{\epsilon}{2}}C-1} + 1\]
By Theorem \ref{thm:c}, $(U_M, U_W)$ is $\pa{2n(n-1)(C-1) + 1}$-robust. We show that $(U_M, U_W)$ is not probabilistically $(1+\epsilon)C$ robust by defining an appropriate joint distribution over preference profiles $R^M, R^W \in \SR^n$ and probabilistic $C$-perturbations $\delta_M$ and $\delta_W$. We define a collection of $2n^2$ random variables. We first select some $(a^*,i^*) \in [2n]\times[n-1]$ uniformly at random. We then define, for $(a,i) \in [n] \times [n-1]$
\[\Delta^M_{(a,i)} := \begin{cases}2n(n-1)\pa{(1+\epsilon)C - 1} + 1 & a = a^*, i = i^*\\1&\elsec\end{cases}\]
\[\Delta^W_{(a,i)} := \begin{cases}2n(n-1)\pa{(1+\epsilon)C - 1} + 1 & a-n = a^*, i = i^*\\1&\elsec\end{cases}\]
We then define, for all $a \in [n]$, $\Delta^M_{(a,n)} = 1$. Notice that either all of the $\Delta^M$ variables will be equal to $1$, and one $\Delta^W$ variable will not, or all $\Delta^W$ variables will be equal to $1$ and one $\Delta^M$ variable will not. Suppose without loss of generality that all of the $\Delta^W$ variables are $1$. We pick $R^M$ uniformly at random, and let
\[\delta_M(m, X(R^M, m, i)) = \Delta_{(m,i)}\]
for all $m \in [n]$. We have that $\delta_M$ is a probabilistic $(1+\epsilon)C$-perturbation, as $\Delta_{(m,i)} \ge 1$ with probability $1$, and for all $m \in [n]$ and $i \in [n-1]$,
\begin{align*}
    \ex{\Delta_{(m,i)}} &= \frac{2n(n-1)\pa{(1+\epsilon)C - 1} + 1}{2n(n-1)} + 1 - \frac{1}{2n(n-1)}\\
    &= (1+\epsilon)C
\end{align*}
Next, notice that by the construction of $(U_M, U_W)$, we must have that $\SR(\delta_MU_M[R^M]) \ne R^M$. By Lemma \ref{lem:ne}, there exists some $R'$ such that $\Phi(\SR(\delta_MU_M[R^M]), R') \ne \Phi(R^M, R')$. We pick $R^W = R'$. Lastly, we pick $\delta_W(w,m) = 1$ for all $w,m \in [n]$. Under this joint distribution, it then follows that
\[\pr{\Phi(R^M, R^W) = \Phi(\SR(\delta_MU_M[R^M]), \SR(\delta_WU_W[R^W]))} = 0\]
whence we have that $(U_M, U_W)$ is not probabilistically $(1+\epsilon)C$-robust.
\end{proof}

\end{document}